\DeclareMathOperator{\diam}{diam}
\newcommand{\cone}{\textsc{cone}}
\newcommand{\FRUSTUM}{\textsc{frus}}
\newcommand{\vol}{\textsc{vol}}
\date{}
\begin{document}
\title{\Large{Iterative models for complex networks formed by extending cliques}\thanks{Research supported by grants from NSERC and the Fields Insitute.}}
\author{Anthony Bonato\inst{1}, Ryan Cushman\inst{1}, Trent G.\ Marbach\inst{1}, \and Zhiyuan Zhang \inst{1}}
\institute{Toronto Metropolitan University}
\maketitle
\begin{abstract}

We consider a new model for complex networks whose underlying mechanism is extending dense subgraphs. In the frustum model, we iteratively extend cliques over discrete-time steps. For many choices of the underlying parameters, graphs generated by the model densify over time. In the special case of the cone model, generated graphs provably satisfy properties observed in real-world complex networks such as the small world property and bad spectral expansion. We finish with a set of open problems and next steps for the frustum model. 

\end{abstract}

\section{Introduction}

The vast volume of data mined from the web and other networks from the physical, natural, and social sciences, suggests a view of many real-world networks as self-organizing phenomena satisfying common properties. Such complex networks capture dyadic interactions in many phenomena, ranging from friendship ties in Facebook, to Bitcoin transactions, to interactions between proteins in living cells. Complex networks evolve via a number of mechanisms such as preferential attachment or copying that predict how links between vertices are
formed over time. Key empirically observed properties of complex networks are the small world property (which predicts small distances between typical pairs of vertices and high local clustering), power law degree distributions (where most vertices have low degree, but there are a small number of high degree vertices), and densification (where the average degree tends to infinity with time). Early models such as preferential attachment \cite{ba,bol} successfully captured these properties and others. See the book \cite{bbook} for a survey of early complex networks models, along with \cite{at}.

Cliques are simplified representations of highly interconnected structures in networks. For example, in the Facebook social network, a clique consists of accounts linked via friendship or mutual interests. Cliques are of interest in network science as one type of \emph{motif}, which are certain small-order significant subgraphs; this higher-order network perspective has lead to a focus on hypergraphs in network science \cite{ak,be}. We may view the hypergraph of cliques in a network as a kind of backbone, which allows for the rapid diffusal of information and influence to nodes over time. Cliques in social and other networks grow organically, and therefore it is natural to consider models simulating their evolution. 

We introduce the \emph{frustum model}, which is simplified model for clique evolution. This elementary-seeming model leads to rich dynamics, generating graphs sharing many of the properties observed in complex networks. As an illustrative instance of the frustum model, whose precise definition will appear in Section~2, consider the \emph{cone model}. If in the $(t-1)$th time-step the model generated a graph $G_{t-1}$, then in the $t$th time-step and for each existing vertex $u$ in $G_{t-1}$, a clique of a prescribed order is added that is adjacent to $u.$ An illustration of several time-steps of the cone model is given in Figure~\ref{figcone}.

\begin{figure}[h]
\begin{minipage}{.04\textwidth}\centering
\includegraphics[scale=.005]{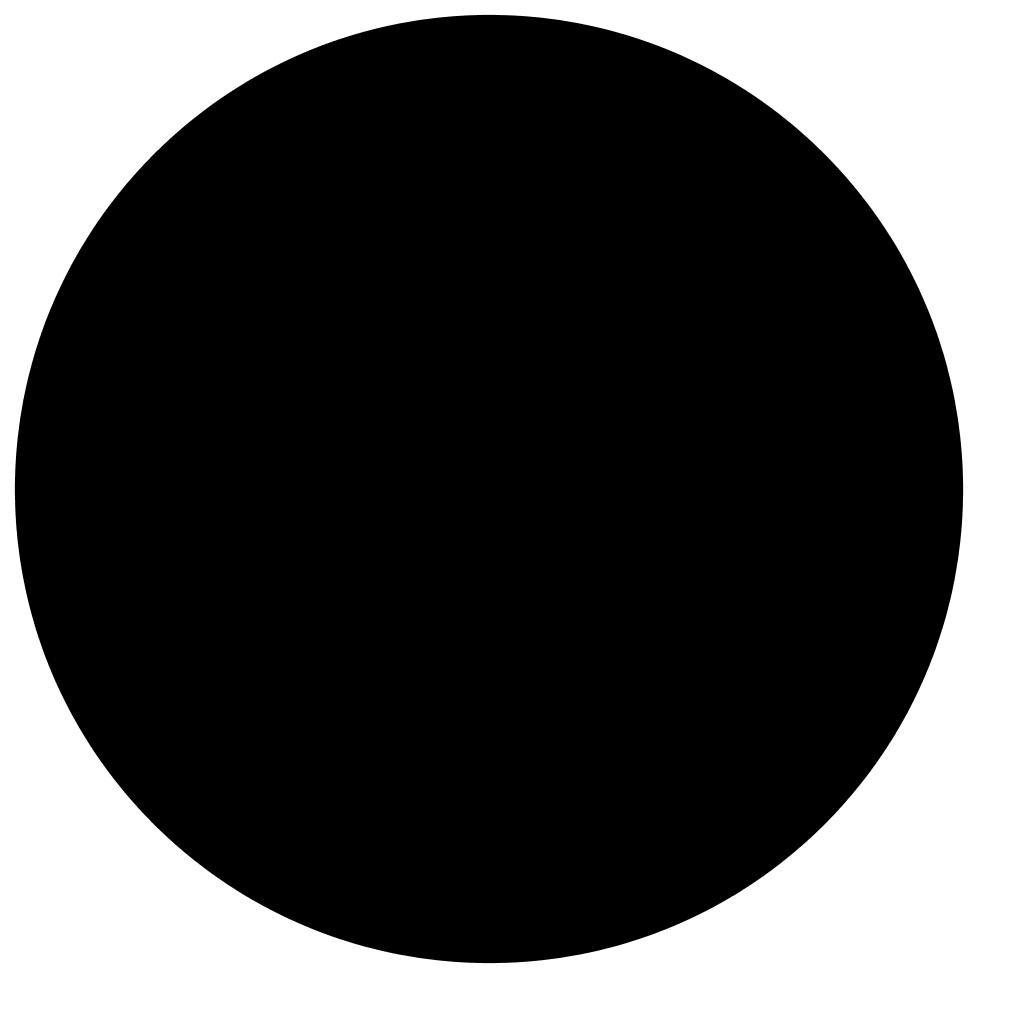}
\end{minipage}
\begin{minipage}{.07\textwidth}\centering
\includegraphics[scale=.03, angle=30]{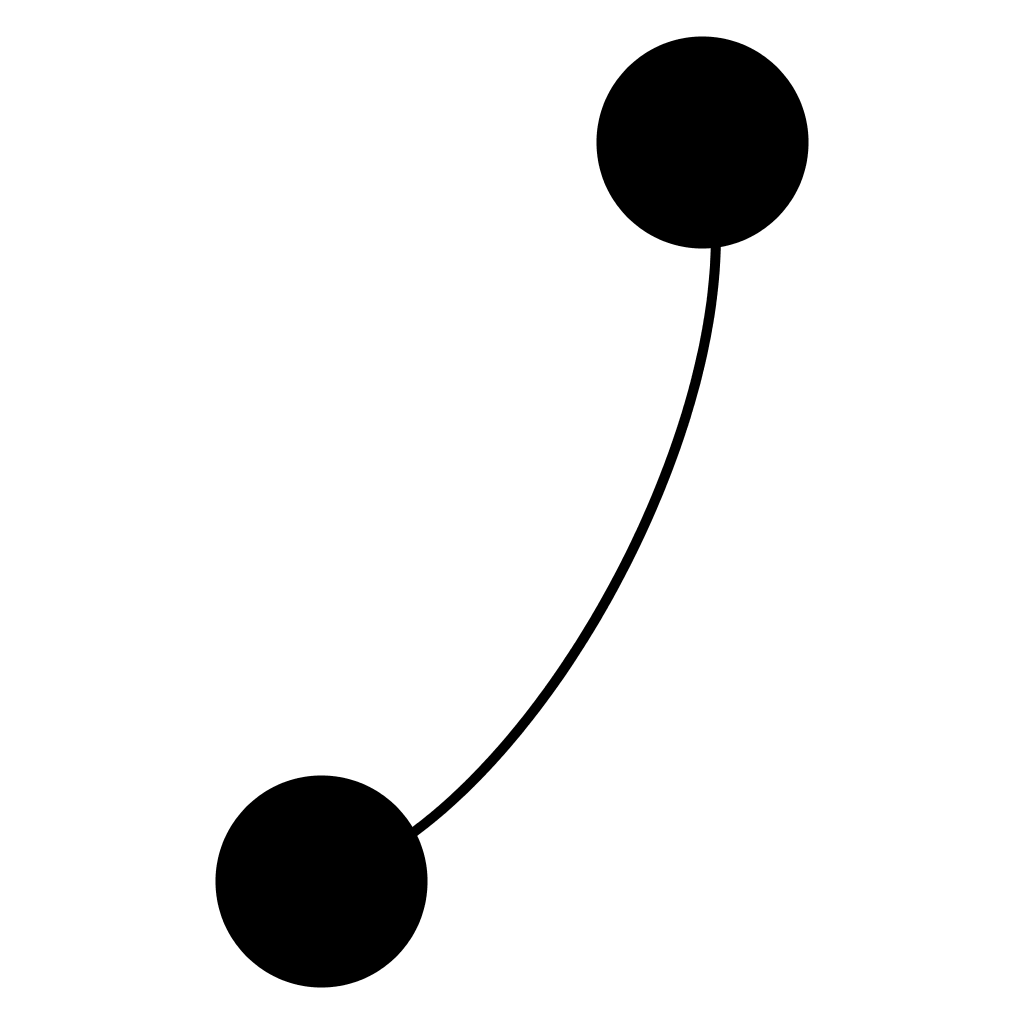}
\end{minipage}
\begin{minipage}{.12\textwidth}\centering
\includegraphics[scale=.05, angle=-50]{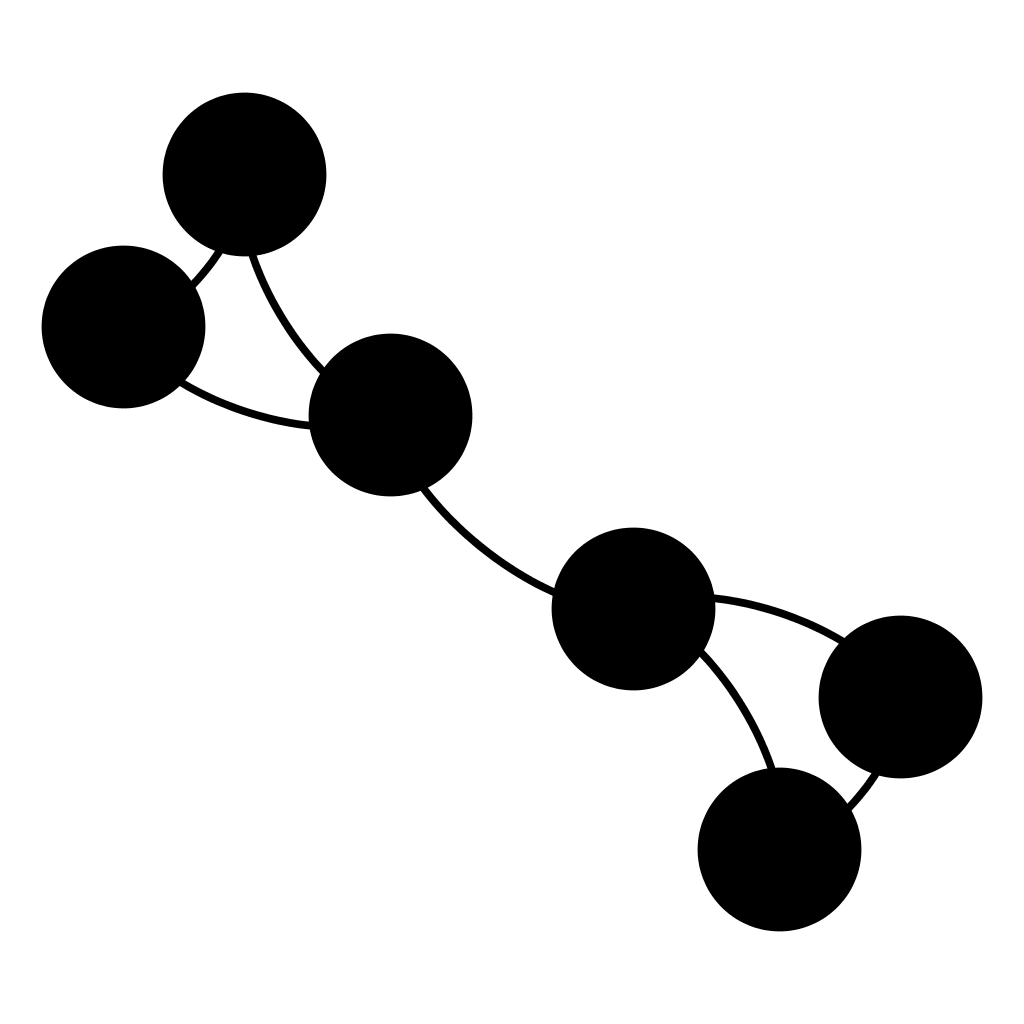}
\end{minipage}
\begin{minipage}{.34\textwidth}\centering
\includegraphics[scale=.15]{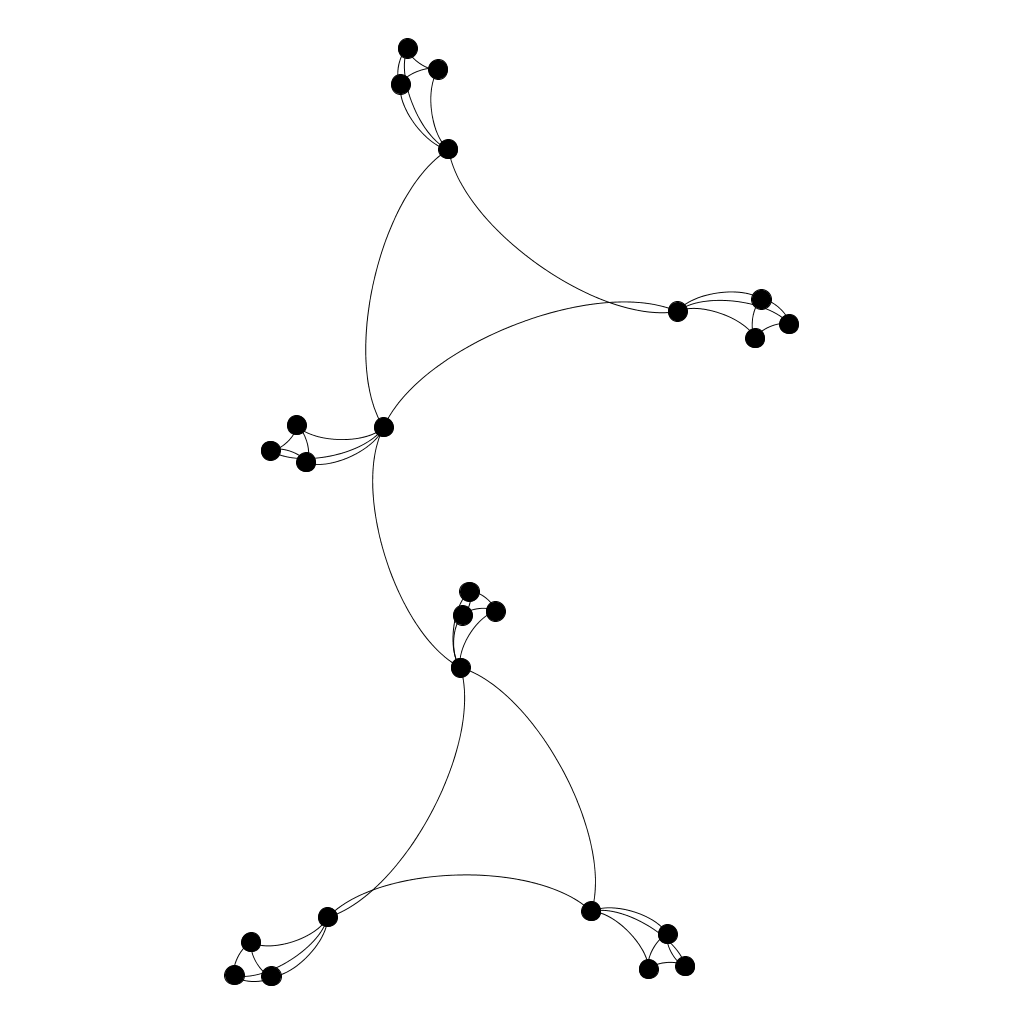}
\end{minipage}
\begin{minipage}{.37\textwidth}\centering
\includegraphics[scale=.14]{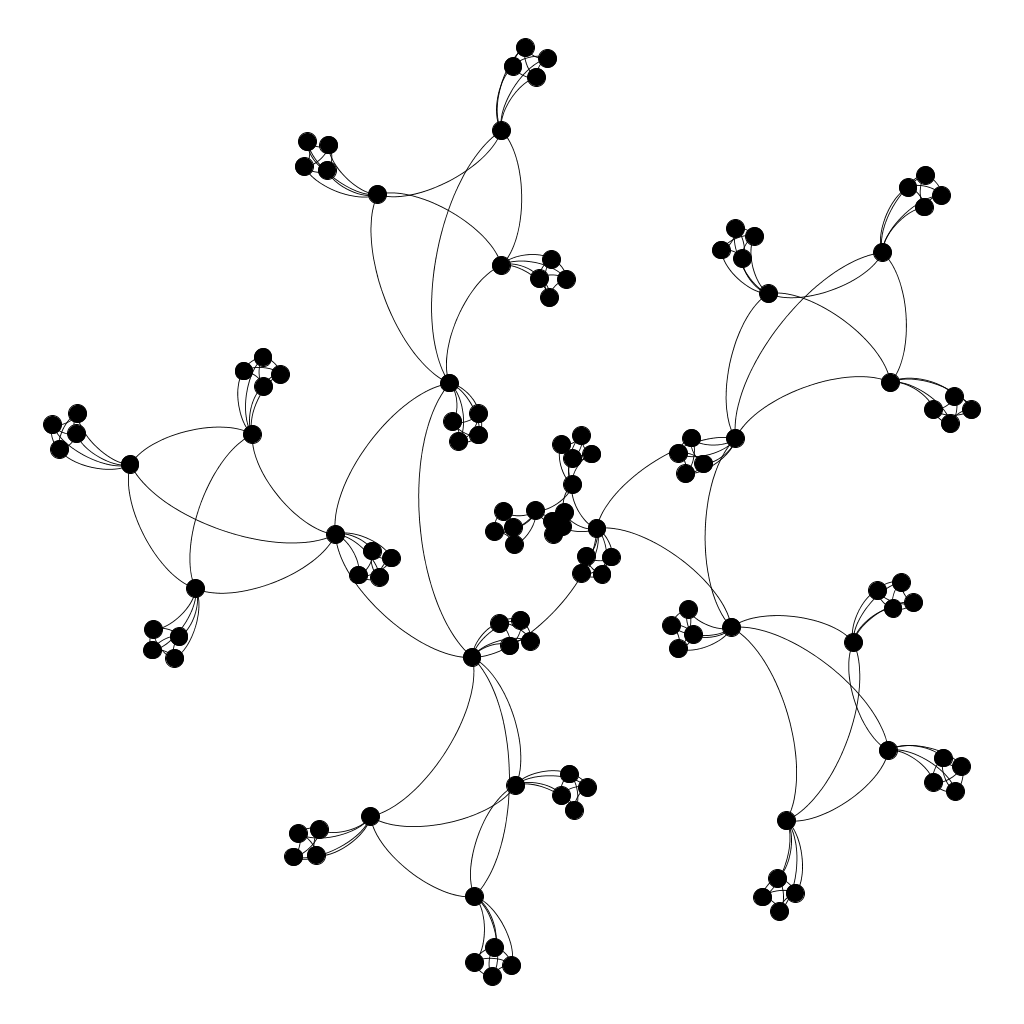}
\end{minipage}
\begin{minipage}{.45\textwidth}\centering
\includegraphics[scale=.18]{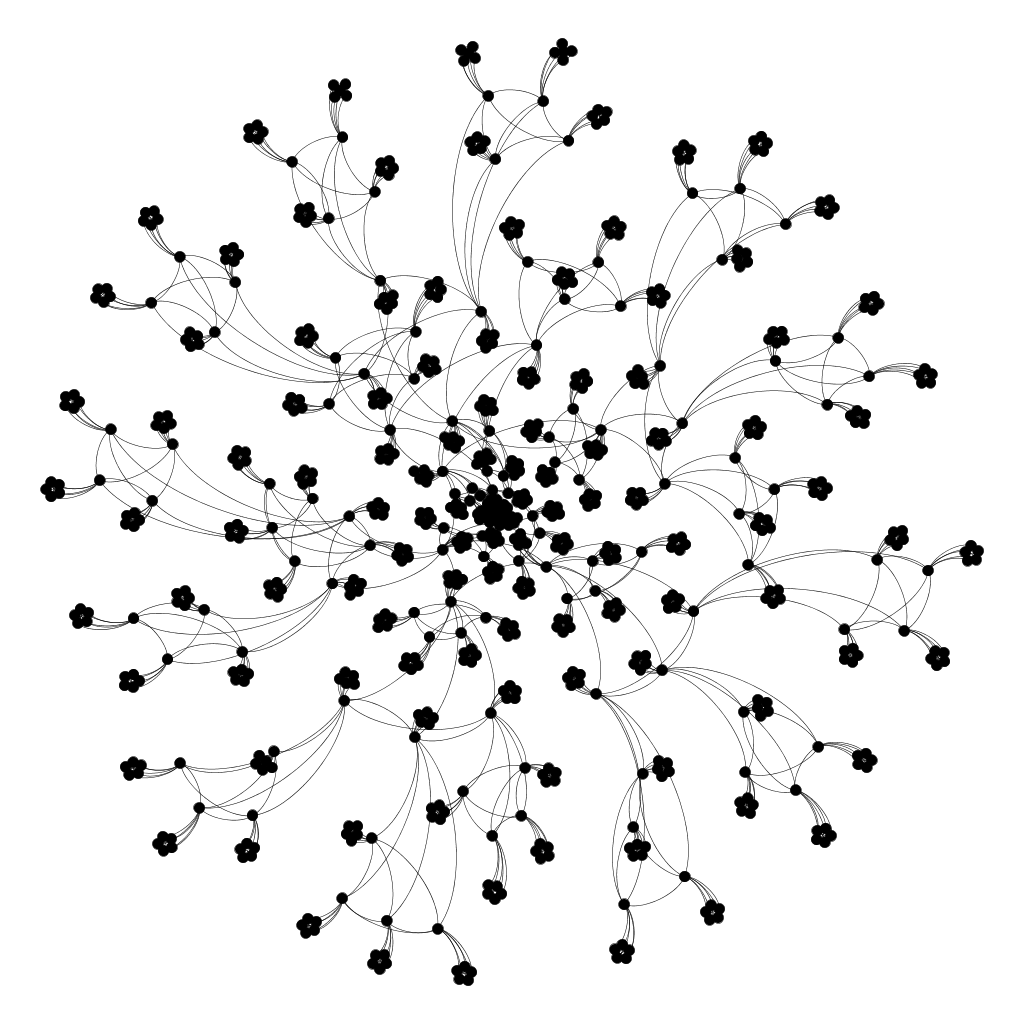}
\end{minipage}
\hfill
\begin{minipage}{.45\textwidth}\centering
\includegraphics[scale=.18]{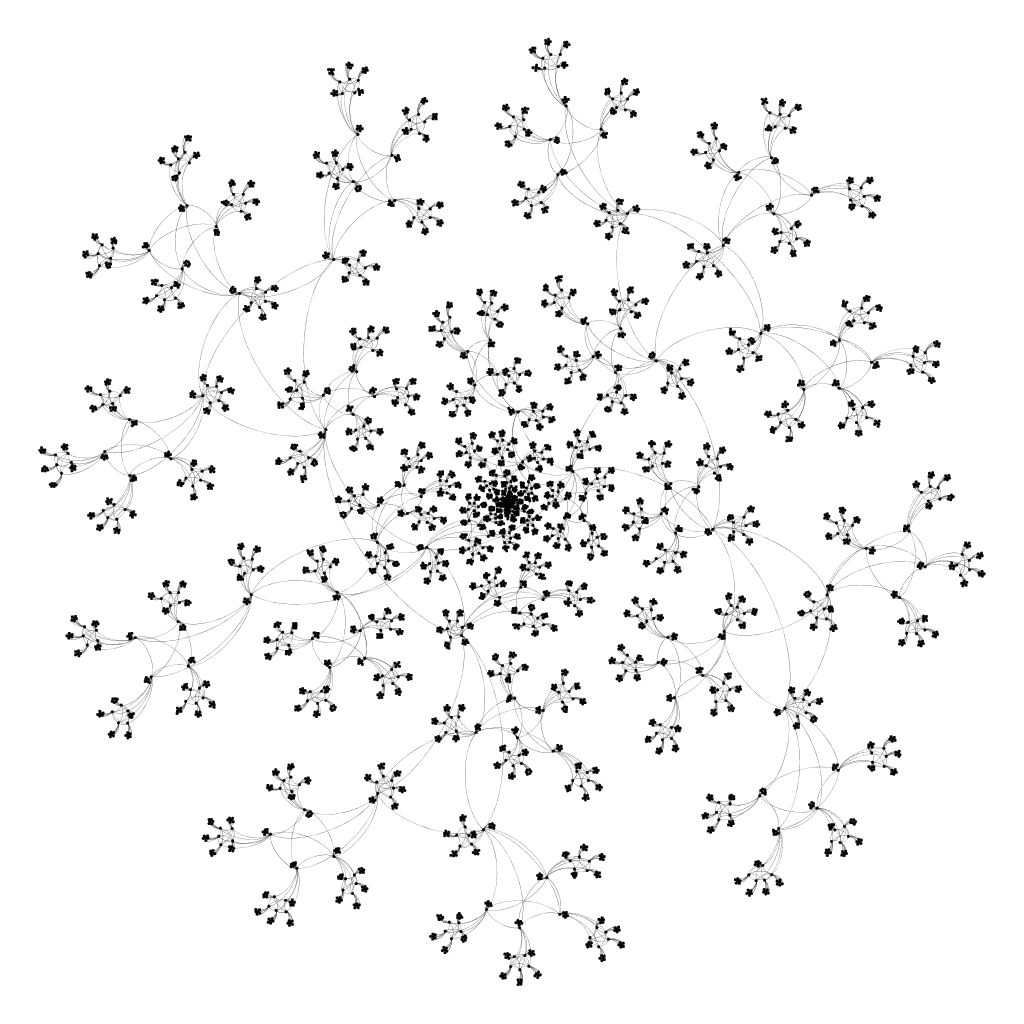}
\end{minipage}
\caption{Graphs $G_t$ generated by the cone model starting with the one-vertex graph, where $0\le t \le 6$, from top left to bottom right. In each time-step $t\ge 1$, a clique of order $t$ is added that is adjacent to each existing vertex.}\label{figcone}
\end{figure}

The frustum model was inspired in part by families of complex network models such as the Iterated Local Transitivity (ILT) model \cite{ilt} and the Iterated Local Anti-Transitivity (ILAT) model \cite{ilat}. Motivated by structural balance theory, the ILT and ILAT iteratively add transitive or anti-transitive triangles over time. Graphs generated by these models exhibit several properties observed in complex networks such as densification, small world properties, and bad spectral expansion. Both the ILT and ILAT models were unified in the recent context of Iterated Local Models in \cite{ilm}. Versions of the ILT model were considered for directed graphs \cite{directed} and hypergaphs \cite{hyper}, and a global version was considered in \cite{global}.

In the present paper, we explore the complex network and graph theoretic properties of graphs generated by the frustum model. Section~2 formally introduces the model and proves a general sufficient condition for its graphs to satisfy densification. The cone model is explored in Section~3, and it is shown that this model generates graphs which densify, satisfy the small world property, and exhibit bad spectral expansion. The final section contains several open problems and further directions.

For a general reference on graph theory, the reader is directed to the book of West \cite{west}. For background on social and complex networks, see \cite{bbook,CL}. Throughout the paper, we consider finite, undirected graphs.

\section{The frustum model}

We now formally introduce the \emph{frustum model} $\FRUSTUM(n,f,g),$ whose parameters are a positive integer $n,$ and integer-valued functions $f$ and $g.$ For simplicity, we take these functions to be non-decreasing. We will write $f_t=f(t)$ and $g_t = g(t)$.  To simplify various proofs, we assume the mild conditions that $f_t < f_{t-1}+g_{t-1}$, $f_0 \leq n$, and $n < f_0+g_0.$ 

The model generates graphs over a sets of discrete time-steps indexed by non-negative integers, with $G_0$ the clique $K_n$. Assuming $G_{t-1}$ is defined, then define $G_{t}$ as follows: for each induced clique $X$ of order $f_t$ in $G_{t-1}$, add a new set of $g_t$ vertices $Y=Y_X$ so that $X\cup Y$ forms a clique. 
Note that the newly created vertices $Y_X$ form a connected component in $G_{t} \setminus G_{t-1}$; that is, $Y_X \cap Y_{X'}=\emptyset$ for distinct $(f_{t})$-cliques $X$ and $X'$ in $G_t$.  The name of the model comes from geometric frustums, which are portions of a cone or pyramid that remain after its upper part is removed by cutting with a plane parallel to its base. We refer to the $G_t$ as \emph{frustum graphs from} $\FRUSTUM(n,f,g)$ or simply as \emph{frustum graphs}.

The \emph{cone model} is the frustum model with $f_t = 1$ for all $t$, and the \emph{cylinder model} is the frustum model with $f_t = g_t$ for all $t$. See Figure~\ref{figcyl} for an illustration of the cylinder model.

\begin{figure}[h]
\begin{minipage}{.1\textwidth}\centering
\includegraphics[scale=.007]{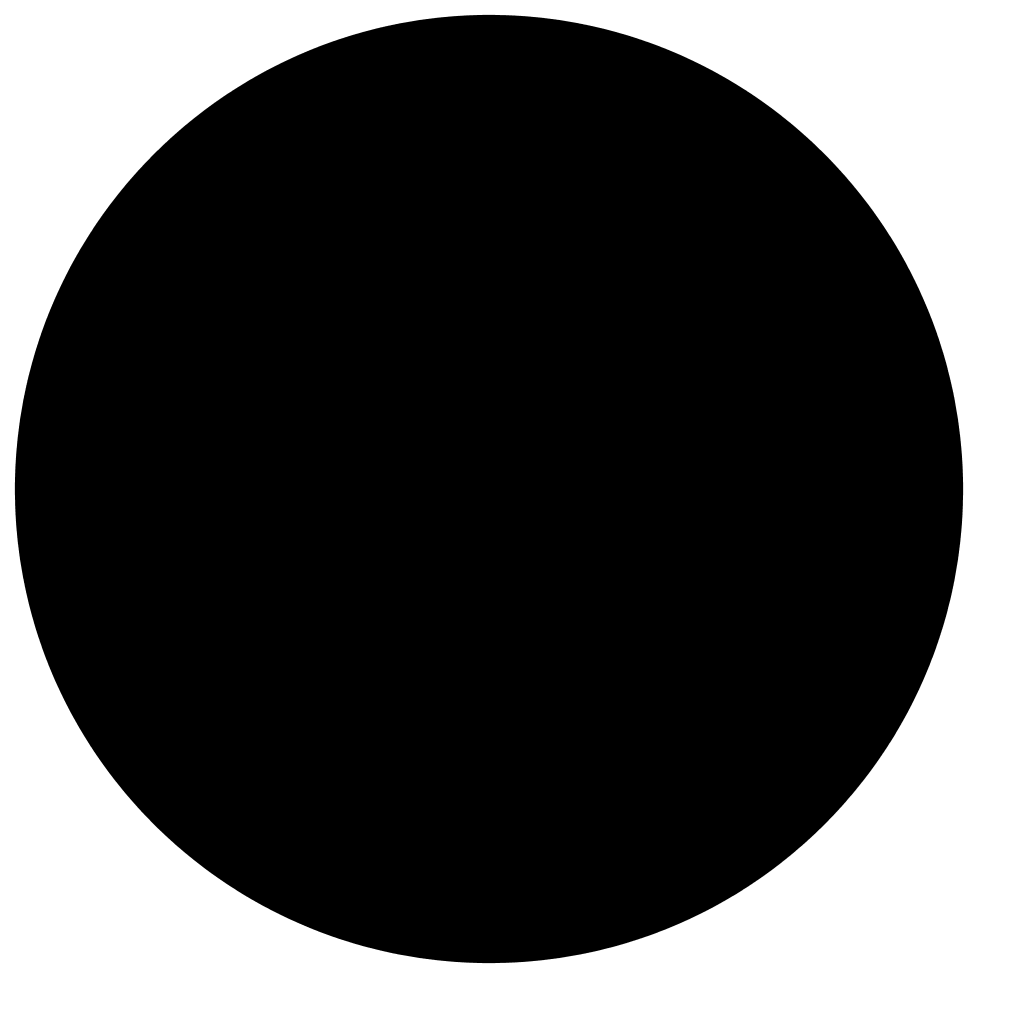}
\end{minipage}
\begin{minipage}{.1\textwidth}\centering
\includegraphics[scale=.03]{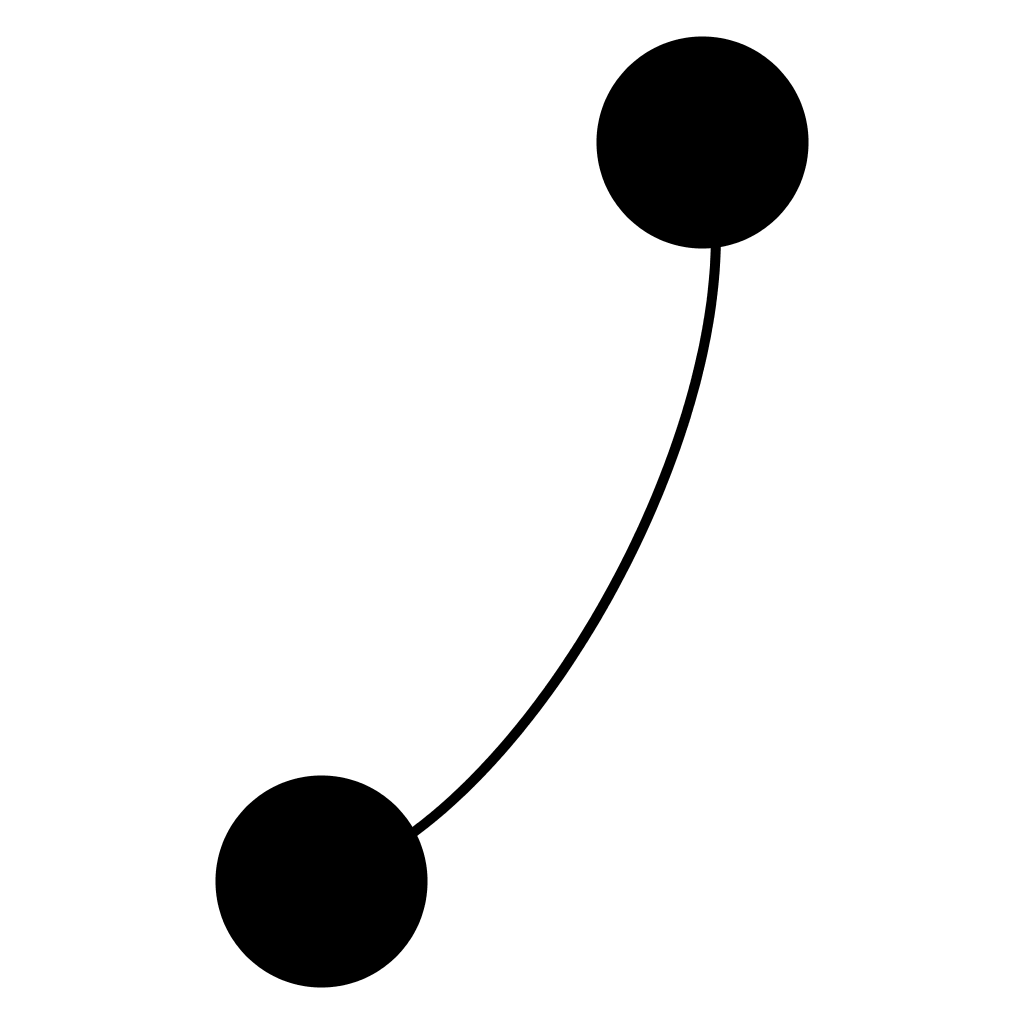}
\end{minipage}
\begin{minipage}{.1\textwidth}\centering
\includegraphics[scale=.03]{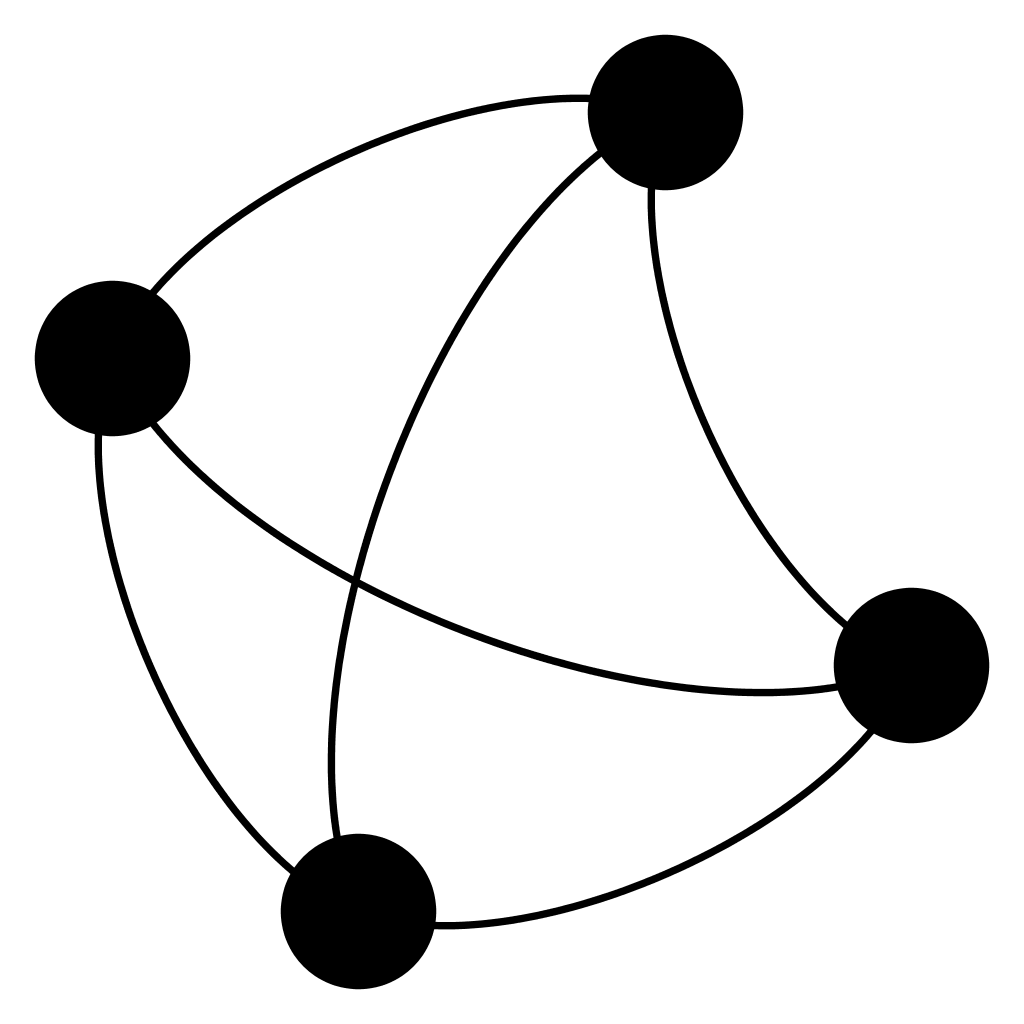}
\end{minipage}
\begin{minipage}{.3\textwidth}\centering
\includegraphics[scale=.12]{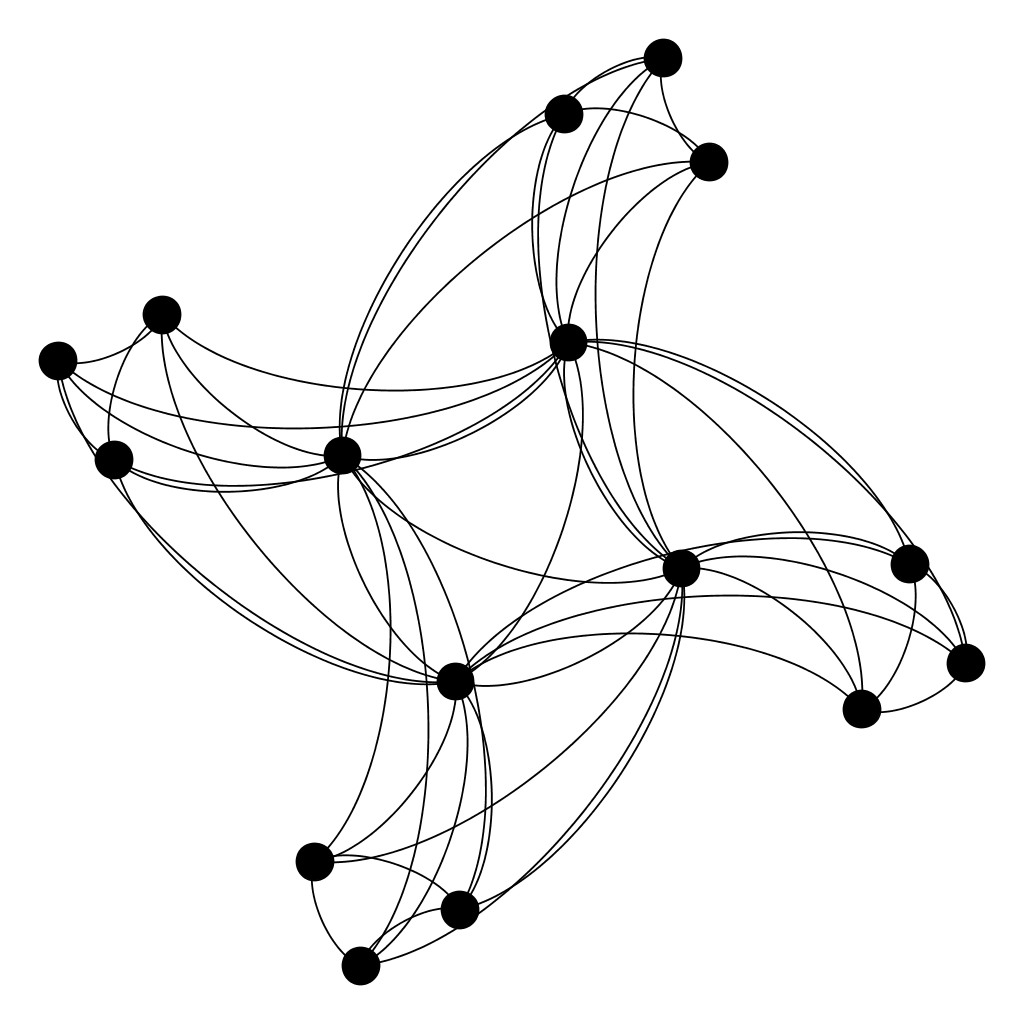}
\end{minipage}\hfill
\begin{minipage}{.3\textwidth}\centering
\includegraphics[scale=.12]{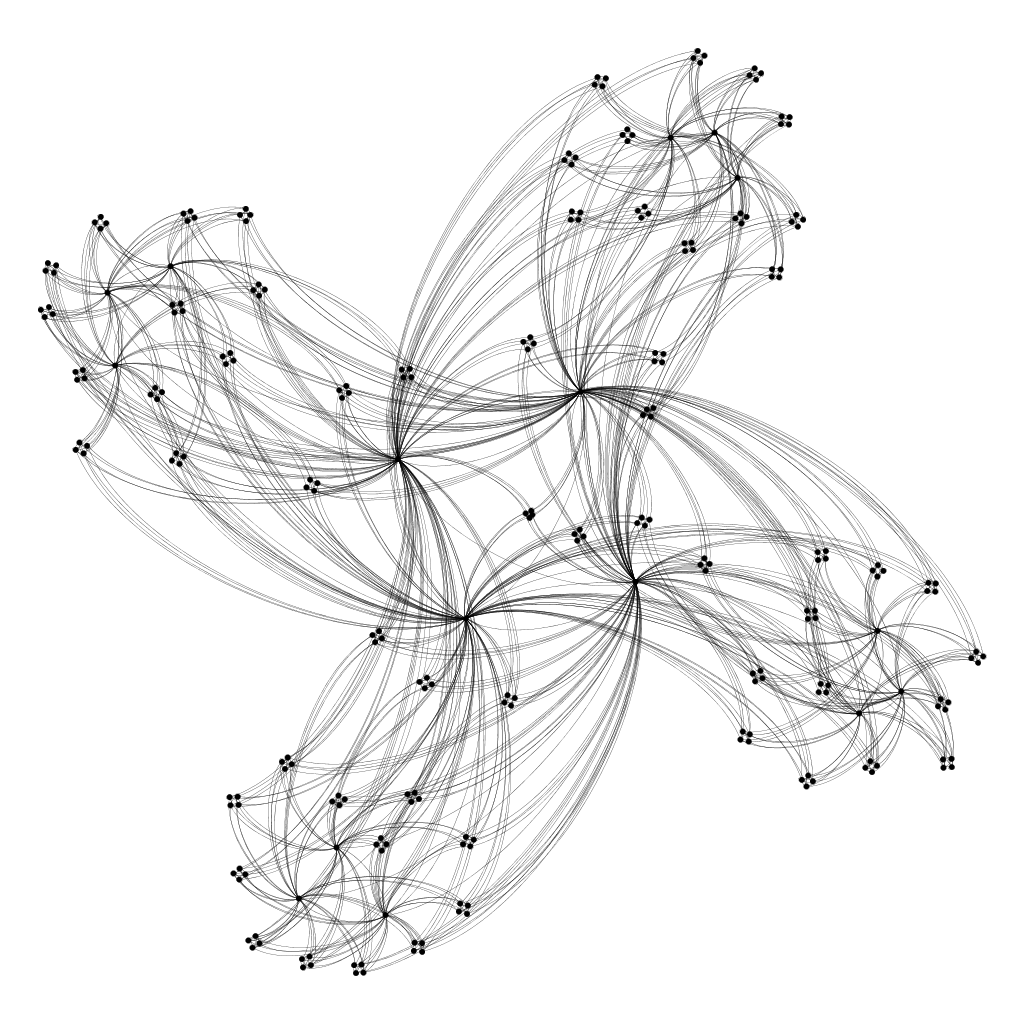}
\end{minipage}
\caption{Graphs $G_t$ of the cylinder model $\FRUSTUM(1,t,t)$ with $0\le t \le 4$.}\label{figcyl}
\end{figure}

Complex networks often exhibit densification, where the number of edges grows faster than the number of vertices; see \cite{les1}. We show densification holds for frustum graphs in a large number of choices of parameters. We give a sufficient condition for frustum model graphs to densify in our first theorem. 

Let $n_t$ be the number of vertices in $G_t$ and $e_t$ be the number of edges in $G_t$. 
Take $V_t$ to be the vertices that occur in $G_t$ but not in $G_{t-1}$. 
We define $C_t^k(u)$ to be the number of cliques of order $k$ in $G_t$ that contain the vertex $u\in V(G_t)$.

\begin{theorem} \label{lem:exp_growth}
Let $G_t$ be a frustum graph from $\FRUSTUM(n,f,g)$. If $$\liminf_{t \ge 0}{\min_{u\in V(G_{t-1})}(C_{t-1}^{f_{t}}(u))g_t / f_t}=a$$ 
with $a>1$, then $n_t$ increases as a multiple of $n_{t-1}$. Further, given a real number $\varepsilon>0$ such that $a-\varepsilon>1$, 
	\[n_t = \Omega((a-\varepsilon)^t).\]
\end{theorem}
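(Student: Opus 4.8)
The plan is to set up and iterate a multiplicative recursion for $n_t$. At time-step $t$ the model attaches $g_t$ new vertices to each induced $(f_t)$-clique of $G_{t-1}$, and by the disjointness built into the model ($Y_X \cap Y_{X'} = \emptyset$ for distinct $(f_t)$-cliques $X, X'$), these contributions do not overlap. Hence if $q_{t-1}$ denotes the number of induced $(f_t)$-cliques of $G_{t-1}$, then $|V_t| = g_t q_{t-1}$ and $n_t = n_{t-1} + g_t q_{t-1}$. The first substantive step is to bound $q_{t-1}$ from below in terms of $n_{t-1}$. Counting incident pairs consisting of a vertex $u$ of $G_{t-1}$ together with an $(f_t)$-clique of $G_{t-1}$ through $u$, and using that each such clique has exactly $f_t$ vertices, gives $f_t\, q_{t-1} = \sum_{u \in V(G_{t-1})} C_{t-1}^{f_t}(u) \ge n_{t-1} \cdot \min_{u \in V(G_{t-1})} C_{t-1}^{f_t}(u)$. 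Substituting yields the recursion
\[ n_t \ \ge\ n_{t-1}\Bigl(1 + \tfrac{g_t}{f_t}\,\min_{u \in V(G_{t-1})} C_{t-1}^{f_t}(u)\Bigr) \qquad (t \ge 1). \]

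Write $Q_t = \tfrac{g_t}{f_t}\min_{u \in V(G_{t-1})} C_{t-1}^{f_t}(u)$ for the expression whose $\liminf$ is assumed to equal $a$, so the recursion reads $n_t \ge (1+Q_t)\,n_{t-1}$. Fix $\varepsilon > 0$ with $a - \varepsilon > 1$. By the definition of $\liminf$ there is a threshold $T$ with $Q_t > a - \varepsilon$ for all $t \ge T$, hence $1 + Q_t > a - \varepsilon > 1$ there; this establishes that $n_t$ grows by at least the constant factor $a - \varepsilon$ past $T$, which is the first assertion. Iterating the recursion from $T$ onward gives $n_t \ge (a-\varepsilon)^{\,t-T}\, n_T$ for all $t \ge T$, and since $n_T\,(a-\varepsilon)^{-T}$ is a positive constant independent of $t$, we conclude $n_t = \Omega\bigl((a-\varepsilon)^t\bigr)$.

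The computation is short once the counting identity is in hand, so I do not expect a serious obstacle. The two points needing care are: (i) the passage from ``number of $(f_t)$-cliques'' to the quantity $\min_u C_{t-1}^{f_t}(u)$, which is the only place the hypothesis enters and rests on the elementary double-counting above together with the model's disjointness of newly added vertex sets; and (ii) the behaviour of the $\liminf$, which controls $Q_t$ only for $t$ beyond some threshold $T$ — this is precisely why the conclusion is an $\Omega$ bound, absorbing the first $T$ steps into the implied constant, rather than a closed-form expression. Note finally that the hypothesis $a > 1$ is used only to guarantee that $a - \varepsilon$ exceeds $1$ for small $\varepsilon$; the recursion $n_t \ge (1+Q_t)n_{t-1}$ holds in general but yields exponential growth only in this regime.
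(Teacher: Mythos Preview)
Your proof is correct and follows essentially the same approach as the paper: both obtain the inequality $(n_t-n_{t-1})f_t \ge n_{t-1}\min_u C_{t-1}^{f_t}(u)\,g_t$ by double counting (the paper phrases it as counting edges in the bipartite graph between $V(G_{t-1})$ and $V_t$, you phrase it as counting incident vertex--clique pairs), and then iterate the resulting multiplicative bound past a threshold supplied by the $\liminf$ hypothesis. The arguments are equivalent.
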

\begin{proof}
We first show that the number of vertices created in time step $t$, $n_{t} - n_{t-1}$, is some multiple of $n_{t-1}$.  Consider the bipartite graph $B_{t}$ formed by taking the edges between vertex sets $V(G_{t-1})$ and $V_{t}$ that also occur in $E(G_{t})$. Observe that in $B_{t}$, the vertices in  $V_{t}$ all have degree $f_{t}$. In $B_{t}$, a vertex $u\in V(G_{t-1})$ has degree $C_{t-1}^{f_t}(u) g_t$. Counting the number of edges in $B_{t}$ from the perspective of each part yields $(n_{t} - n_{t-1}) f_t \geq n_{t-1} \min_{u\in V(G_{t-1})}(C_{t-1}^{f_t}(u)) g_t$. 
Thus, the number of vertices added on each iteration of this model increases multiplicatively by at least $\min_{u\in V(G_{t-1})}(C_{t-1}^{f_t}(u)) g_t / f_t$. 

For any $\varepsilon >0$, there must be a time-step $t'$ such that $$\min_{u\in V(G_{t-1})}(C_{t-1}^{f_t}(u))g_t / f_t \geq a-\varepsilon$$ 
for all $t \geq t'$.  Therefore, the number of vertices multiplicatively increases from this time-step, so the number of vertices is at least $(a-\varepsilon)^{t-t'} = \Omega((a-\varepsilon)^t)$, as required. \qed
\end{proof}

We now come to the main theorem of this section.

\begin{theorem}\label{mainden}
If $g_t + f_t = \omega(1)$ and $$\liminf_{t\ge 1}{\min_{u\in V(G_{t-1})}(C_{t-1}^{f_t}(u))g_t / f_t}>1,$$ 
then frustum graphs from $\FRUSTUM(n,f,g)$  densify in the sense that $e_t/n_t = \omega(1)$.
\end{theorem}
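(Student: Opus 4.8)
The plan is to lower-bound $e_t$ by counting only the edges created in the final time-step, and to show that this alone forces $e_t/n_t\to\infty$. First I would record the per-step edge count. At time-step $t$, for each induced $f_t$-clique $X$ of $G_{t-1}$ we attach a set $Y_X$ of $g_t$ fresh vertices so that $X\cup Y_X$ is a clique; thus each vertex of $Y_X$ is joined to the $f_t$ vertices of $X$ and to the other $g_t-1$ vertices of $Y_X$. Since the sets $Y_X$ are pairwise disjoint (as noted when the model was defined), no edge is created twice, so summing the degrees of the new vertices and correcting for the edges inside the $Y_X$, which are counted twice, gives
\[
e_t - e_{t-1} = (n_t - n_{t-1})\left(f_t + \tfrac{g_t-1}{2}\right).
\]
As edges are never deleted, $e_t \ge (n_t-n_{t-1})\left(f_t+\tfrac{g_t-1}{2}\right)$.

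Next I would invoke Theorem~\ref{lem:exp_growth}, and in particular the multiplicative vertex growth established in its proof, to control $n_t-n_{t-1}$. Writing $a$ for the $\liminf$ in the hypothesis and fixing $\varepsilon>0$ with $a-\varepsilon>1$, there is a time-step $t'$ with $n_t-n_{t-1}\ge(a-\varepsilon)n_{t-1}$ for all $t\ge t'$; equivalently $n_t\ge(1+a-\varepsilon)n_{t-1}$, so $n_{t-1}\le n_t/(1+a-\varepsilon)$ and hence
\[
n_t - n_{t-1} \ge \frac{a-\varepsilon}{1+a-\varepsilon}\,n_t
\]
for all $t\ge t'$. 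In other words, the vertices born at step $t$ constitute at least a fixed positive fraction $c:=(a-\varepsilon)/(1+a-\varepsilon)$ of all vertices of $G_t$.

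Combining the two displays, for $t\ge t'$,
\[
\frac{e_t}{n_t} \ge c\left(f_t+\frac{g_t-1}{2}\right) \ge \frac{c}{2}\,(f_t+g_t-1).
\]
Since $f_t+g_t=\omega(1)$ by hypothesis, the right-hand side tends to infinity, so $e_t/n_t=\omega(1)$, which is exactly densification.

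I do not expect a genuine obstacle: the argument is a short combination of the per-step edge count with the exponential vertex growth already proved. The one point needing a little care is the passage from ``$n_t$ grows by a multiplicative factor bounded away from $1$'' to ``$n_t-n_{t-1}$ is a constant fraction of $n_t$'' — that is, observing that $n_t/n_{t-1}\ge 1+a-\varepsilon$ forces $(n_t-n_{t-1})/n_t$ to be bounded below — together with making the edge count genuinely avoid double-counting, which is precisely where the disjointness of the attachment sets $Y_X$ is used.
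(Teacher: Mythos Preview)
Your proof is correct and follows essentially the same approach as the paper: both combine the per-step edge identity $(e_t-e_{t-1})/(n_t-n_{t-1})=f_t+(g_t-1)/2$ (the paper's Lemma~\ref{lem:density}) with the multiplicative vertex growth $n_t-n_{t-1}\ge(a-\varepsilon)n_{t-1}$ from Theorem~\ref{lem:exp_growth}. The only cosmetic difference is that the paper argues by contradiction---assuming $e_t/n_t\le b$ and showing this would force the unbounded quantity $(e_t-e_{t-1})/(n_t-n_{t-1})$ to be at most $b+b/(a-\varepsilon)$---whereas you run the equivalent direct inequality $e_t/n_t\ge c\bigl(f_t+(g_t-1)/2\bigr)$.
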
 
\begin{proof}
By Theorem~\ref{lem:exp_growth}, for $t$ sufficiently large, we have that $n_{t}-n_{t-1} \geq (a-\varepsilon) n_{t-1}$, where $a=\liminf_{t\ge 1}{\min_{u\in V(G_{t-1})}(C_{t-1}^{f_t}(u))g_t / f_t}>1$ and $\varepsilon$ is such that $0 < \varepsilon <a-1$. Suppose for a contradiction that $e_t/n_t$ is bounded above by $b$. 

Note that as shown in Lemma \ref{lem:density} in the Appendix, $$\frac{e_{t} - e_{t-1}}{n_{t} - n_{t-1}} = \frac{g_t}{2} + f_t-\frac{1}{2}.$$ Hence, $\frac{e_{t} - e_{t-1}}{n_{t} - n_{t-1}} = \omega(1)$ since $f_t+g_t = \omega(1)$. However, we have that $$\frac{e_{t} - e_{t-1}}{n_{t} - n_{t-1}} \leq \frac{ b n_{t}} {n_{t} - n_{t-1}} = b + \frac{ b n_{t-1}} {n_{t} - n_{t-1}} \leq  b + \frac{ b n_{t-1}} {(a-\varepsilon) n_{t-1}} = b + \frac{ b } {(a-\varepsilon)},$$
 which is a contradiction. 
  \qed
\end{proof}

Theorem~\ref{mainden} applies to the cone model, since in this case, $g_t+f_t = g_t+1$ tends to infinity with $t$ and $\liminf_{t\ge 1}{\min_{u\in V(G_t)}(C_{t-1}^{f_t}(u))g_t / f_t} \geq \liminf{g_t}>1$. We have the following general result, which applies to a large set of frustum models.

\begin{corollary}\label{cor: densify frustum}
If $g_t + f_t = \omega(1)$, $f_t\geq 3$ and $g_t \geq 2$ for some $t$, and $g_{t-1}+f_{t-1} \in \{f_t-1,f_t-2\}$ for only a finite number of $t$, then the graphs generated by the $\FRUSTUM(n,f,g)$ model densify.
\end{corollary}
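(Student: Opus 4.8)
The plan is to deduce the corollary from Theorem~\ref{mainden}. That theorem's first hypothesis, $g_t+f_t=\omega(1)$, is assumed here, so everything reduces to checking its second hypothesis, $\liminf_{t\ge 1}\min_{u\in V(G_{t-1})}(C_{t-1}^{f_t}(u))\,g_t/f_t>1$. The heart of the matter is a lower bound on $C_{t-1}^{f_t}(u)$ that is uniform over all $u$, and to get one I would first establish the following claim: for every $t\ge 1$, every vertex of $G_t$ lies in a clique of $G_t$ of order at least $f_t+g_t$.

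I would prove this claim by induction on $t$. For $t=1$ we have $G_0=K_n$, and the standing assumptions ($f_0\le n$, $n<f_0+g_0$, $f_t<f_{t-1}+g_{t-1}$) guarantee the model genuinely extends at the first step (in particular $f_1\le n$): each vertex of $K_n$ lies in some induced $f_1$-clique $X$, hence in the clique $X\cup Y_X$ of order $f_1+g_1$, and each newly created vertex lies in such a clique by construction. For the inductive step, take $v\in V(G_t)$. If $v$ is new at step $t$ then $v\in Y_X$ for some $f_t$-clique $X$ of $G_{t-1}$, so $v$ lies in $X\cup Y_X$, of order $f_t+g_t$. If $v\in V(G_{t-1})$, the induction hypothesis puts $v$ in a clique $K$ of $G_{t-1}$ with $|K|\ge f_{t-1}+g_{t-1}$; the mild condition $f_t<f_{t-1}+g_{t-1}$ then gives $f_t\le|K|-1$, so we may pick an $f_t$-clique $X$ with $v\in X\subseteq K$, and $v$ lies in the clique $X\cup Y_X$ of $G_t$, once more of order $f_t+g_t$.

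Granting the claim, fix a large $t$ and any $u\in V(G_{t-1})$. Applying the claim to $G_{t-1}$, the vertex $u$ lies in a clique $K$ with $|K|\ge f_{t-1}+g_{t-1}\ge f_t+1$; each $f_t$-subset of $K$ containing $u$ is an $f_t$-clique of $G_{t-1}$ through $u$, so
\[
C_{t-1}^{f_t}(u)\ \ge\ \binom{|K|-1}{f_t-1}\ \ge\ \binom{f_t}{f_t-1}\ =\ f_t.
\]
Hence $\min_{u\in V(G_{t-1})}(C_{t-1}^{f_t}(u))\,g_t/f_t\ge g_t$ for all large $t$, and since $g$ is non-decreasing with $g_t\ge 2$ for some $t$, we have $g_t\ge 2$ eventually, so the liminf in question is at least $2>1$. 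Theorem~\ref{mainden} now gives $e_t/n_t=\omega(1)$. (The condition $f_{t-1}+g_{t-1}\notin\{f_t-1,f_t-2\}$ is automatic from $f_t<f_{t-1}+g_{t-1}$, and $f_t\ge 3$ eventually only strengthens the estimate above; these simply make the hypothesis robust.)

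Since the counting amounts to a single binomial inequality, I expect the only delicate point to be the base case of the structural claim together with degenerate parameter regimes: one has to be sure the model keeps adding vertices from the very start — e.g.\ that $f_1\le n$, that $f_t\ge 1$, and that $g_t\ge 1$ for all $t$ (the last following from $f_0\le n<f_0+g_0$) — since otherwise $G_t$ could be eventually constant and densification would fail outright.
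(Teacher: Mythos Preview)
Your argument is correct and shares the paper's overall strategy---verify the second hypothesis of Theorem~\ref{mainden} by bounding $C_{t-1}^{f_t}(u)$ from below via the large clique containing $u$---but your execution is more elementary. The paper invokes Lemma~\ref{lem:num_cliques} to get $C_{t-1}^{f_t}(u)\ge\binom{g_{t-1}+f_{t-1}-1}{f_t-1}-\binom{f_{t-1}-1}{f_t-1}$ and then pushes this through the chain $(g_{t-1}+f_{t-1}-f_t-1)^{f_t-1}-1\ge 2^{f_t-1}-1\ge f_t$; the hypotheses $f_t\ge 3$ and $g_{t-1}+f_{t-1}\notin\{f_t-1,f_t-2\}$ are there precisely to make that chain go through. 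You instead isolate the structural claim that every vertex of $G_{t-1}$ lies in a clique of order at least $f_{t-1}+g_{t-1}$ and read off $C_{t-1}^{f_t}(u)\ge\binom{f_t}{f_t-1}=f_t$ in one step, which---as you note---renders both extra hypotheses superfluous (indeed $f_{t-1}+g_{t-1}\in\{f_t-1,f_t-2\}$ already cannot occur under the standing condition $f_t<f_{t-1}+g_{t-1}$). The paper's route gives a marginally sharper clique count, but that sharpness is never used; your route buys a cleaner proof under weaker hypotheses. One minor point: the standing assumptions do not literally force $f_1\le n$ (try $n=3$, $f_0=2$, $g_0=5$, $f_1=6$), but the paper's Lemma~\ref{lem:num_cliques} makes the same tacit assumption when it asserts $C_{t-2}^{f_{t-1}}(u)\ge 1$, so this is not a defect of your approach relative to the paper.
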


\begin{proof} To see this, note that $g_{t-1}+f_{t-1} \geq f_t-1$ by our assumptions on the model. Assuming $g_{t-1}+f_{t-1} \neq f_t-1$, it follows from Lemma~\ref{lem:num_cliques} in the Appendix that 
$C_{t-1}^{f_t}(u) \geq  \binom{g_{t-1}+f_{t-1}-1}{f_t-1} - \binom{f_{t-1}-1}{f_t-1} 
\geq (g_{t-1}+f_{t-1} -f_t-1)^{f_t-1}-1 \geq 2^{f_t-1}-1 \geq f_t$ when $g_{t-1}+f_{t-1} \notin \{f_t-1,f_t-2\}$. \qed
\end{proof}

For cylinder models with $f_t=g_t=t$, Corollary~\ref{cor: densify frustum} gives densification. In the case when $f_t=g_t$ is a constant, however, we observe graphs generated by the model do not densify. Consider $G_0 = K_n$ and set $f_t= n$ for all $t\geq 0$.
Denote by $C_t$ the number of cliques of order $n$ at time-step $t$. We then have that
$$
n_{t} = nC_{t-1} + n_{t-1},\quad 
e_{t} = e_{t-1} + \frac{3n^2-n}{2}C_{t-1},\quad 
\text{and } C_{t} = \binom{2n}{n}C_{t-1},
$$
where $C_0 = 1$ and $n_0 = n$. 
Note that $C_t = \binom{2n}{n}^t$. Solving the recursion, we have that 
$$n_t = n + nC_0\frac{\binom{2n}{n}^{t} - 1}{\binom{2n}{n}-1}
\quad\text{ and }\quad e_t = \binom n2 + \frac{3n^2-n}{2} C_0\frac{\binom{2n}{n}^{t} - 1}{\binom{2n}{n}-1},$$
where $C_0 =1$.
In particular, $\frac{e_t}{n_t}$ tends to the constant $\frac{3n-1}{2}$ as $t$ tends to $\infty.$

\section{Cone models}

We next take a more in-depth view of graphs from cone models, where $f_t =1$ for all $t.$ If $S$ is either a vertex of a clique in $G_{t-1},$ then we denote the newly added clique $K_{g_{t}}$ for $S$ by $\textsc{CAP}_{f,g}(S,t)$ or $\textsc{CAP}(S,t)$ when $f,g$ are clear from context. 

While we know that graphs from the cone model densify by Corollary~\ref{cor: densify frustum}, we give a more precise estimate on their densification.
\begin{theorem}
In $\FRUSTUM(1,1,g_t)$, for $t> 0$ we have that
$$n_t = \prod_{i=1}^{t} (1+g_{i})$$ and $$e_t = \sum_{i=1}^{t-1} \binom{g_{i+1}+1}{2}\prod_{j=1}^i (1+g_{j}).$$ In particular, 
${e_t}/{n_t} = \Omega(g_{t})$, and graphs generated by the model densify if $g_t=\omega(1).$
\end{theorem}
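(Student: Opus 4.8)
The plan is to set up and solve two elementary recurrences, one for $n_t$ and one for $e_t$, and then read the density bound off the last term of the edge recurrence. Since $f_t=1$, the induced cliques of order $f_t$ in $G_{t-1}$ are precisely the vertices of $G_{t-1}$, so at step $t$ each vertex $u\in V(G_{t-1})$ receives its own fresh set $Y_{\{u\}}$ of $g_t$ vertices forming a clique together with $u$; by the model's definition these sets are pairwise disjoint and disjoint from $V(G_{t-1})$. Hence $n_t-n_{t-1}=g_t\,n_{t-1}$, that is $n_t=(1+g_t)n_{t-1}$, and since $G_0=K_1$ has $n_0=1$, an immediate induction gives $n_t=\prod_{i=1}^t(1+g_i)$.

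For the edges, the only new edges created at step $t$ are those incident to the newly added vertices, and for a fixed $u$ the block $\{u\}\cup Y_{\{u\}}$ contributes $\binom{g_t}{2}+g_t=\binom{g_t+1}{2}$ of them, with no overlap between distinct $u$ because the sets $Y_{\{u\}}$ are disjoint. Therefore $e_t=e_{t-1}+\binom{g_t+1}{2}n_{t-1}$ with $e_0=0$; unrolling this recurrence and substituting the closed form $n_{i-1}=\prod_{j=1}^{i-1}(1+g_j)$ yields $e_t=\sum_{i=1}^{t}\binom{g_i+1}{2}\prod_{j=1}^{i-1}(1+g_j)$, and a shift of the summation index puts this in the form stated in the theorem. (I would be mildly careful here about the endpoints of the sum and the empty-product convention.)

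Finally, for the density estimate I would note that every summand defining $e_t$ is non-negative, so $e_t$ is at least its last term $\binom{g_t+1}{2}n_{t-1}$. Dividing by $n_t=(1+g_t)n_{t-1}$ gives $e_t/n_t\ge \binom{g_t+1}{2}/(1+g_t)=g_t/2$, which is $\Omega(g_t)$; and if $g_t=\omega(1)$ then $e_t/n_t\to\infty$, i.e.\ the graphs densify.

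There is no serious obstacle in this argument; the only things to watch are the bookkeeping in the reindexing of the edge sum, and making explicit use of the disjointness of the $Y_X$ (guaranteed by the model) so that the vertex and edge counts come out as exact identities rather than mere lower bounds. One could alternatively verify both closed forms directly by induction on $t$, using the one-step identities $n_t=(1+g_t)n_{t-1}$ and $e_t-e_{t-1}=\binom{g_t+1}{2}n_{t-1}$, which is perhaps the cleanest write-up.
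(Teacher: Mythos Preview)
Your proposal is correct and follows essentially the same approach as the paper: set up the one-step recurrences $n_t=(1+g_t)n_{t-1}$ and $e_t=e_{t-1}+\binom{g_t+1}{2}n_{t-1}$, solve them by induction/unrolling, and bound $e_t/n_t$ below by the contribution of the last summand to obtain $g_t/2$. Your caution about the endpoints of the edge sum is well placed, since after reindexing the sum should naturally start at $i=0$ (the $\binom{g_1+1}{2}$ term with an empty product) rather than at $i=1$ as printed.
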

\begin{proof}
Notice that $e_0 = 0$, $n_0 = 1$ and
$$
	n_{t+1} = n_t + n_t g_{t+1} = n_t(1 + g_{t+1}).
$$
Hence, by induction we have that $n_{t+1} = \prod_{i=1}^{t+1} (1+g_{i})$. Now we also have that
\begin{align*}
	e_{t+1} &= e_{t} + n_t g_{t+1} + n_t\binom{g_{t+1}}{2} 
	= e_t + n_t\binom{g_{t+1}+1}{2}\\
	&= e_t + \binom{g_{t+1}+1}{2}\prod_{i=1}^{t} (1+g_{i})\\
	&= \sum_{i=1}^{t-1} \binom{g_{i+1}+1}{2}\prod_{j=1}^i (1+g_{j}) + \binom{g_{t+1}+1}{2}\prod_{i=1}^{t} (1+g_{i})\\
	&= \sum_{i=1}^{t} \binom{g_{i+1}+1}{2}\prod_{j=1}^i (1+g_{j}). 
\end{align*}
Further, we have that
\begin{align*}
	\frac{e_t}{n_t} &=\frac{\sum_{i=1}^{t-1} \binom{g_{i+1}+1}{2}\prod_{j=1}^i (1+g_{j})}{\prod_{i=1}^{t} (1+g_{i})} 
	\ge \frac{\binom{g_{t}+1}{2} \prod_{j=1}^{t-1} (1+g_{j})}{\prod_{i=1}^{t} (1+g_{i})} 
	= \frac{g_{t}}{2} = \Omega(g_{t}).
\end{align*}
The proof follows. \qed
\end{proof}

\subsection{Small world properties}

The following lemma gives precise values of distances of graphs from the cone model.

\begin{lemma}\label{lem:dist}
In \FRUSTUM$(1,1,g_t)$, for distinct $x,y \in V(G_{t-1})$, then we have the following.
\begin{enumerate}
\item 
$
d_{t}(x,y) = d_{t-1}(x,y);
$
\item
$d_{t}(x,y') = d_{t-1}(x,y) + 1$, where $y'\in \textsc{CAP}(y,t)$; and,

\item
$d_{t}(x',y') = d_{t-1}(x,y) + 2$, where $x'\in\textsc{CAP}(x,t)$ and $y'\in \textsc{CAP}(y,t)$.
\end{enumerate}
\end{lemma}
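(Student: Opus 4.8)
The plan is to reduce all three parts to a single structural observation about how $G_t$ sits on top of $G_{t-1}$ in the cone model. Since $f_t = 1$, at time $t$ every vertex $u\in V(G_{t-1})$ receives its own cap $\textsc{CAP}(u,t)$, the set $\{u\}\cup\textsc{CAP}(u,t)$ induces a clique in $G_t$, and no edges are created between two different caps or between a cap and $V(G_{t-1})\setminus\{u\}$. Hence each new vertex $v\in\textsc{CAP}(u,t)$ satisfies $N_{G_t}(v)=\bigl(\textsc{CAP}(u,t)\setminus\{v\}\bigr)\cup\{u\}$, so $u$ is a cut vertex of $G_t$ whose deletion separates $\textsc{CAP}(u,t)$ from the rest of the graph. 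I would record this as a short preliminary claim, proved directly from the definition of $\FRUSTUM(1,1,g_t)$ and the disjointness $Y_X\cap Y_{X'}=\emptyset$.

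From this, part (1) follows by showing $G_{t-1}$ is an \emph{isometric} subgraph of $G_t$. It is an induced subgraph, so $d_t(x,y)\le d_{t-1}(x,y)$; for the reverse inequality I would take a shortest $x$–$y$ path $P$ in $G_t$ and argue that if $P$ met a cap $\textsc{CAP}(u,t)$, then, since such a cap is reachable only through $u$, $P$ would enter and leave it via $u$, so replacing the intervening segment by the single vertex $u$ yields a strictly shorter $x$–$y$ walk, a contradiction. Thus some shortest path avoids $V_t$ altogether and $d_t(x,y)=d_{t-1}(x,y)$.

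For parts (2) and (3) I would compute $d_t$ by stepping one edge out of the new vertex. For (2): the neighbours of $y'$ are $y$ together with the remaining vertices of $\textsc{CAP}(y,t)$; any path from $x\in V(G_{t-1})$ to a neighbour $w\neq y$ of $y'$ must pass through the cut vertex $y$, so $d_t(x,w)\ge 1+d_t(x,y)>d_t(x,y)$, whence $d_t(x,y')=1+d_t(x,y)=d_{t-1}(x,y)+1$ by part (1), the matching upper bound coming from appending the edge $yy'$ to a $G_{t-1}$-geodesic. For (3): since $x\neq y$, the caps $\textsc{CAP}(x,t)$ and $\textsc{CAP}(y,t)$ are disjoint; the neighbours of $x'$ are $x$ and the rest of $\textsc{CAP}(x,t)$, and any path from $y'$ to such a neighbour $w\neq x$ must pass through the cut vertex $x$, so $d_t(w,y')\ge 1+d_t(x,y')$; hence $d_t(x',y')=1+d_t(x,y')=1+\bigl(d_{t-1}(x,y)+1\bigr)=d_{t-1}(x,y)+2$ by part (2).

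The only real content lies in the no-shortcut step of part (1); once the cut-vertex claim is in hand this is immediate, and parts (2) and (3) are then one-line neighbourhood arguments, so I do not anticipate a genuine obstacle. If a more uniform write-up is preferred, I would instead prove the single statement that $d_t(v,z)=d_{t-1}(u,z)+1$ for every $v\in\textsc{CAP}(u,t)$ and every $z\in V(G_{t-1})$, from which (2) and (3) fall out by one or two applications together with (1); but the case analysis above is already short.
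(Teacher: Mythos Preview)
Your proposal is correct and follows essentially the same approach as the paper: both arguments hinge on the observation that each cap $\textsc{CAP}(u,t)$ is attached to $G_{t-1}$ solely through $u$, so any path entering a cap must leave through the same vertex, which forces shortest paths between old vertices to stay in $G_{t-1}$ and makes parts (2) and (3) immediate one-edge extensions. Your write-up is more explicit about the cut-vertex structure and the neighbourhood minimization, but this is a difference in detail rather than in strategy.
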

\begin{proof}
For the first claim, note that any shorter path would have to travel through $G_{t}$, but there are no $xy$-paths that contain vertices in $G_t$. This is because any such path would enter $G_t$ only at $\textsc{CAP}(z,t)$ for some $z \in V(G_{t-1})$ and the only way to reenter $G_{t-1}$ is to travel to $z$ again. Thus, shorter paths can only occur in $G_{t-1}$, a contradiction. 

For the second and third claim, notice that the only edge from $y'$ to $G_{t-1}$ is $yy'$ and the only paths containing $y'$ that are contained in $G_t$ is within $\textsc{CAP}(y,t)$. By the first claim, $d_{t}(x,y') = d_{t-1}(x,y) + 1$ and $d_{t}(x',y') = d_{t-1}(x,y) + 2$. \qed \end{proof}

With Lemma~\ref{lem:dist}, we prove that diameters are small in the cone model, in the sense that they grow logarithmically with their orders.

\begin{lemma}
In $\FRUSTUM(1,1,g_t)$, for $t > 0$, we have that $\diam(G_t) = 2t-1 = O(\log n_t).$
\end{lemma}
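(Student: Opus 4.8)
The plan is to prove the exact value $\diam(G_t) = 2t-1$ by induction on $t$, using Lemma~\ref{lem:dist} for the inductive step, and then to read off the logarithmic bound from the vertex-count formula $n_t = \prod_{i=1}^t(1+g_i)$ established above. Note first that the standing assumptions on the model ($f_0=1$, $n=1<f_0+g_0$, and $f_t<f_{t-1}+g_{t-1}$) force $g_t\ge 1$ for every $t\ge 0$, a fact used repeatedly below.

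For the base case $t=1$: since $G_0=K_1$ and a clique of order $g_1\ge 1$ is attached to its single vertex, $G_1=K_{1+g_1}$, which has diameter $1=2(1)-1$. For the inductive step, assume $t\ge 2$ and $\diam(G_{t-1})=2t-3$. I would split the vertex pairs of $G_t$ into three types and bound each via Lemma~\ref{lem:dist}: (i) both endpoints in $V(G_{t-1})$, where distances are unchanged by part~(1), so the maximum is $2t-3$; (ii) one endpoint $x\in V(G_{t-1})$ and one endpoint $y'\in\textsc{CAP}(y,t)$, where the distance is $1$ if $x=y$ (the base vertex of the cap) and $d_{t-1}(x,y)+1\le 2t-2$ otherwise by part~(2); and (iii) both endpoints in caps, $x'\in\textsc{CAP}(x,t)$ and $y'\in\textsc{CAP}(y,t)$, where the distance is $1$ if $x=y$ and $d_{t-1}(x,y)+2\le 2t-1$ otherwise by part~(3). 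This gives $\diam(G_t)\le 2t-1$. For the matching lower bound, I would pick a pair $x,y\in V(G_{t-1})$ realizing $d_{t-1}(x,y)=2t-3$; since $t\ge 2$ this distance is positive, so $x\ne y$, and then for any $x'\in\textsc{CAP}(x,t)$, $y'\in\textsc{CAP}(y,t)$ part~(3) gives $d_t(x',y')=2t-1$. Hence $\diam(G_t)=2t-1$, completing the induction.

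For the statement $2t-1=O(\log n_t)$, I would use that each factor in $n_t=\prod_{i=1}^t(1+g_i)$ is at least $2$, so $n_t\ge 2^t$, giving $t\le\log_2 n_t$ and therefore $2t-1<2\log_2 n_t=O(\log n_t)$.

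The only real subtlety — the nearest thing to an obstacle — is ensuring in the inductive step that the diameter jumps by exactly $2$: the upper bound requires all three cases of Lemma~\ref{lem:dist} to be accounted for (including the degenerate $x=y$ sub-cases, which contribute only distance $1$), and the lower bound requires a diameter-realizing pair of \emph{distinct} vertices in $G_{t-1}$, which is precisely why the hypothesis $t\ge 2$ and the separately verified base case are needed.
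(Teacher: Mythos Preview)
Your proof is correct and follows essentially the same approach as the paper: induction on $t$ with base case $G_1\cong K_{1+g_1}$ and inductive step driven by Lemma~\ref{lem:dist}, observing that the maximum distance in $G_t$ is attained by cap-vertices over a diameter-realizing pair in $G_{t-1}$. Your version is in fact more complete, since you explicitly justify the $O(\log n_t)$ claim via $n_t\ge 2^t$ and carefully handle the degenerate $x=y$ sub-cases, both of which the paper leaves implicit.
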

\begin{proof}
As $g$ is positive, $G_1$ is isomorphic to a clique of order $g(1)+1$. Proceeding by induction, Lemma~\ref{lem:dist} guarantees that the greatest distance between vertices will be realized for $x'\in \cone(x,t), y' \in \cone(y,t)$, where $d(x,y) = \diam(G_{t-1})$. By the induction hypothesis and Lemma~\ref{lem:dist}, $$\diam(G_t) = \diam(G_{t-1}) + 2 = 2t - 1.$$
The proof follows. \qed
\end{proof}

We study the average distances and clustering coefficient of the cone model as time
tends to infinity. Define the \emph{Wiener index} of $G_t$ as
$$
W(G_t) =\frac{1}{2} \sum_{x,y \in V (G_t)}d(x,y).$$
The Wiener index may be used to define the \emph{average distance} of $G_t$ as
$$L(G_t) = \frac{W(G_t)}{\binom{n_t}{2}}.$$

\begin{theorem}\label{thm:coneW}
In $\FRUSTUM(1,1,g_t)$, we have that
$$
W(G_t) = \prod_{i=1}^{t}(1+g_{i})\sum_{i=0}^{t-1} g_{t-i}\prod_{j=1}^{t-i-1} (1+g_{j}) \prod_{j=t-i+1}^{t} (1+g_{j}) .
$$
\end{theorem}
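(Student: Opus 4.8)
The plan is to derive a one-step recursion for $W(G_t)$ in terms of $W(G_{t-1})$ using Lemma~\ref{lem:dist}, and then to solve it. Write $N = n_{t-1}$ and $g = g_t$, and partition $V(G_t) = V(G_{t-1}) \cup V_t$, where $V_t = \bigcup_{u \in V(G_{t-1})} \textsc{CAP}(u,t)$ is the set of $Ng$ newly added vertices. Split the double sum defining $W(G_t)$ into three blocks according to whether a pair has both endpoints old, one old and one new, or both new. By Lemma~\ref{lem:dist}(1), distances between old vertices are unchanged, so the first block contributes exactly $W(G_{t-1})$.

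For the old--new block, each new vertex $y'$ lies in a unique cap $\textsc{CAP}(y,t)$, and Lemma~\ref{lem:dist}(2) gives $d_t(x,y') = d_{t-1}(x,y) + 1$ for all $x \in V(G_{t-1})$ (the case $x=y$ also fits, since $d_t(y,y') = 1$). Summing over the $g$ vertices of each cap and over all ordered pairs $x,y \in V(G_{t-1})$, this block equals $g\bigl(2W(G_{t-1}) + N^2\bigr)$. For the new--new block, a pair $x' \in \textsc{CAP}(x,t)$, $y' \in \textsc{CAP}(y,t)$ with $x \neq y$ contributes $d_{t-1}(x,y) + 2$ by Lemma~\ref{lem:dist}(3), whereas if $x=y$ the two vertices lie in a common clique and are at distance $1$ when distinct; handling these sub-cases gives $g^2 W(G_{t-1}) + g^2 N(N-1) + \tfrac12 Ng(g-1)$ for this block. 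Adding the three contributions and simplifying with $n_t = (1+g_t)n_{t-1}$ produces a recursion of the shape $W(G_t) = (1+g_t)^2 W(G_{t-1}) + \tfrac{g_t}{1+g_t}\, n_t^2 - \tfrac12 g_t n_t$, with $W(G_0) = 0$.

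To finish, divide through by $n_t^2 = \prod_{i=1}^t(1+g_i)^2$; the homogeneous term becomes $W(G_{t-1})/n_{t-1}^2$, so the quotients $W(G_t)/n_t^2$ telescope and their sum equals $\sum_{k=1}^t \tfrac{g_k}{1+g_k} - \tfrac12 \sum_{k=1}^t \tfrac{g_k}{n_k}$. The second sum collapses because $\tfrac{g_k}{n_k} = \tfrac1{n_{k-1}} - \tfrac1{n_k}$. Reindexing $k = t-i$ and using $\tfrac{n_t}{1+g_{t-i}} = \prod_{j=1}^{t-i-1}(1+g_j)\prod_{j=t-i+1}^t(1+g_j)$ rewrites $n_t^2 \sum_k \tfrac{g_k}{1+g_k}$ as the nested product appearing in the statement.

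The main obstacle is the bookkeeping in the case analysis for the recursion — in particular, treating the degenerate $x=y$ sub-cases so that no ordered pair is double-counted or dropped, and keeping the factors of $\tfrac12$ straight when passing between ordered and unordered pairs. The subsequent algebra (rewriting the inhomogeneous term, telescoping, reindexing) is routine, but the stated closed form should be checked against small cases: $G_1 = K_{g_1+1}$ has $W(G_1) = \binom{g_1+1}{2}$, which suggests the displayed formula may be off by a lower-order term of the form $-\binom{n_t}{2}$.
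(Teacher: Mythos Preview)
Your approach is essentially the same as the paper's: both partition pairs of vertices of $G_t$ according to their projections onto $V(G_{t-1})$, apply Lemma~\ref{lem:dist} case by case, and unwind the resulting recursion. Your bookkeeping is correct; in particular your recursion
\[
W(G_t)=(1+g_t)^2W(G_{t-1})+\tfrac{g_t}{1+g_t}\,n_t^{2}-\tfrac12 g_t n_t
\]
checks out, and the telescoping gives $W(G_t)=n_t^{2}\sum_{k=1}^{t}\tfrac{g_k}{1+g_k}-\binom{n_t}{2}$.

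You are also right to flag the discrepancy: the displayed formula in the theorem equals $n_t^{2}\sum_{k}\tfrac{g_k}{1+g_k}$ and is therefore too large by $\binom{n_t}{2}$, as the sanity check $W(G_1)=\binom{g_1+1}{2}$ already shows. The paper's own proof does not actually arrive at the stated formula either: its last displayed line reads $W(G_t)=[\text{stated expression}]+\tfrac12 g_t n_t$, which is yet another (incorrect) value. The slip in the paper's inductive step (Lemma~\ref{lem:W}) is that the number of ordered pairs $(x,y)\in V(G_{t-1})^2$ with $x\neq y$ is taken to be $n_{t-1}^{2}$ rather than $n_{t-1}(n_{t-1}-1)$; with that correction the paper's argument and yours coincide and both yield the formula with the extra $-\binom{n_t}{2}$.
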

\begin{proof}
We claim the following:
$$
\frac{1}{2} \sum_{(x,y) \in V(G_t)^2, x \not =y} d(x,y) = \prod_{i=1}^{t}(1+g_{i})\sum_{i=0}^{t-1} g_{t-i}\prod_{j=1}^{t-i-1} (1+g_{j}) \prod_{j=t-i+1}^{t} (1+g_{j}).
$$
Notice that we may partition $V(G_t)\times V(G_t)$ into the following parts when $x\not = y$: 
For $(x,y)\in V(G_{t-1})\times V(G_{t-1})$ and $x\not = y$, we have the following.
\begin{enumerate}
	\item $\textsc{CAP}(x,t) \times \textsc{CAP}(y,t)$, which contributes $t^2(d(x,y)+2)$ to the sum;
	\item $\lbrace x\rbrace \times \textsc{CAP}(y,t)$, which contributes $t(d(x,y)+1)$ to the sum;
	\item $\textsc{CAP}(x,t) \times \lbrace y\rbrace$, which contributes $t(d(x,y)+1)$ to the sum;
	\item $(x,y)$ contributes $d(x,y)$ to the sum.
\end{enumerate}
By Lemma~\ref{lem:W} in the Appendix, we have 
$$
\frac{1}{2} \sum_{(x,y) \in V(G_t)^2, x \not =y} d(x,y) = \prod_{i=1}^{t}(1+g_{i})\sum_{i=0}^{t-1} g_{t-i}\prod_{j=1}^{t-i-1} (1+g_{j}) \prod_{j=t-i+1}^{t} (1+g_{j}).  
$$
In addition, we obtain a partition of $V(G_{t}) \times V(G_t)$. For $(x,x) \in V(G_{t-1})\times V(G_{t-1})$, we have the following.
\begin{enumerate}
	\item $\textsc{CAP}(x,t) \times \textsc{CAP}(x,t)$, which contributes $g_{t}(g_{t}-1)$ to the sum;
	\item $\lbrace x\rbrace \times \textsc{CAP}(x,t) $, which contributes $g_{t}$ to the sum;
	\item $\textsc{CAP}(x,t) \times \lbrace x\rbrace$, which contributes $g_{t}$ to the sum;
	\item $(x,x)$ contributes $0$ to the sum.
\end{enumerate}	
Hence, we have that
$$
\frac{1}{2} \sum_{(x,x) \in V(G_t)\times V(G_t)} d(x,y) = \frac{1}2 \sum_{(x,x) \in V(G_{t-1})\times V(G_{t-1})} (g_{t}(g_{t}-1)+2f_t) = \frac{1}2 g_{t}\prod_{i=1}^t (1+g_{i}).
$$
Thus, 
$$
W(G_t) = \prod_{i=1}^{t}(1+g_{i})\sum_{i=0}^{t-1} g_{t-i}\prod_{j=1}^{t-i-1} (1+g_{j}) \prod_{j=t-i+1}^{t} (1+g_{j}) + \frac{1}2 g_{t}\prod_{i=1}^t (1+g_{i}).
$$
The proof follows. \qed
\end{proof}

As a consequence, we derive upper bounds on the average distance in cone models.

\begin{corollary}
In $\FRUSTUM(1,1,g_t)$ for $t > 0$,
$$L(G_t) = \Theta\left(g_{t} + t - \sum_{i=1}^{t} \frac{1}{g_{i}+1}\right).$$
Hence, $L(G_t) = O(\log n_t)$ if $g_{t} = O(t)$ and $L(G_t) = O(g_{t})$ if $g_{t} = \omega(t)$.
\end{corollary}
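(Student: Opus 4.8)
\medskip
\noindent\textbf{Proof proposal.}
The plan is to read off $L(G_t)=W(G_t)/\binom{n_t}{2}$ from the closed form for the Wiener index in Theorem~\ref{thm:coneW}. Write $N=n_t=\prod_{i=1}^{t}(1+g_i)$, as shown earlier in this section. The key algebraic move is to collapse the nested products: for each $i$ with $0\le i\le t-1$, the two partial products appearing in that formula together run over every index of $\{1,\dots,t\}$ except $t-i$, so
$$\prod_{j=1}^{t-i-1}(1+g_j)\,\prod_{j=t-i+1}^{t}(1+g_j)=\frac{N}{1+g_{t-i}}.$$
Substituting this and re-indexing by $k=t-i$ turns Theorem~\ref{thm:coneW} into $W(G_t)=N^2\sum_{k=1}^{t}\frac{g_k}{1+g_k}$.

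Next I would divide by $\binom{N}{2}=\frac{N(N-1)}{2}$, giving $L(G_t)=\frac{2N}{N-1}\sum_{k=1}^{t}\frac{g_k}{1+g_k}$. Since $g$ is positive, $N\ge 1+g_1\ge 2$ for every $t\ge 1$, so the prefactor $\frac{2N}{N-1}$ is confined to the interval $(2,4]$ and is $\Theta(1)$, and each summand $\frac{g_k}{1+g_k}$ lies in $[1/2,1)$. Writing $\frac{g_k}{1+g_k}=1-\frac{1}{g_k+1}$ converts $\sum_{k=1}^{t}\frac{g_k}{1+g_k}$ into $t-\sum_{k=1}^{t}\frac{1}{g_k+1}$, which is the main term of the claimed bound; the isolated $g_t$ summand records the weight of the layer of caps added at time $t$ relative to the $n_{t-1}$ older vertices, and is extracted by separating the contribution of the outermost caps before the telescoping above is undone. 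Assembling all of these pieces into a genuine two-sided $\Theta$ — in particular tracking which of them actually dominates once the products are unwound and the constants are pinned down — is the step I expect to demand the most care; the rest is routine estimation.

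For the two ``Hence'' consequences I would compare with quantities already in hand. An average of graph distances never exceeds the diameter, so $L(G_t)\le\diam(G_t)=2t-1=O(t)$ unconditionally. Also $\log n_t=\sum_{i=1}^{t}\log(1+g_i)\ge t\log 2$, so $t=O(\log n_t)$. Hence if $g_t=O(t)$ then $L(G_t)=O(t)=O(\log n_t)$, while if $g_t=\omega(t)$ then $L(G_t)=O(t)=o(g_t)=O(g_t)$, as required.
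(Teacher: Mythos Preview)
Your route is the paper's: plug Theorem~\ref{thm:coneW} into $L(G_t)=W(G_t)/\binom{n_t}{2}$, collapse the two partial products to $N/(1+g_{t-i})$, and rewrite $\sum_k g_k/(1+g_k)$ as $t-\sum_k 1/(g_k+1)$. That algebra is correct and matches the paper line for line.

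The gap is exactly where you flag it. The isolated $g_t$ does not come from ``separating the contribution of the outermost caps before the telescoping is undone''; once you have written $W(G_t)=N^2\sum_k g_k/(1+g_k)$ there is nothing left to separate. In the paper the $g_t$ arises because the \emph{proof} of Theorem~\ref{thm:coneW} actually concludes with
\[
W(G_t)=\prod_{i=1}^{t}(1+g_i)\sum_{i=0}^{t-1} g_{t-i}\prod_{j=1}^{t-i-1}(1+g_j)\prod_{j=t-i+1}^{t}(1+g_j)\;+\;\tfrac12\,g_t\prod_{i=1}^{t}(1+g_i),
\]
the extra $\tfrac12 g_t N$ counting pairs that lie inside a single block $\{x\}\cup\Cap(x,t)$; the theorem's displayed statement drops this additive piece, and so does your computation. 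The paper's proof of the corollary starts from this fuller expression and reads its $g_t$ summand off from that second term. (If you carry the extra piece through carefully you will find it contributes only $g_t/(N-1)=o(1)$ to $L(G_t)$, not $\Theta(g_t)$: the paper's second displayed line loses a factor of $N$ in the cancellation. So your difficulty in locating a genuine $\Theta(g_t)$ contribution is not an accident.)

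Your derivation of the two ``Hence'' clauses via $L(G_t)\le\diam(G_t)=2t-1$ together with $\log n_t\ge t\log 2$ is clean, correct, and independent of the $\Theta$ issue above; the paper does not spell these steps out.
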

\begin{proof}
We have
\begin{align*}
L(G_t) &= \frac{ \prod_{i=1}^{t}(1+g_{i})\sum_{i=0}^{t-1} g_{t-i}\prod_{j=1}^{t-i-1} (1+g_{j}) \prod_{j=t-i+1}^{t} (1+g_{j})+ \frac{1}2 g_{t}\prod_{i=1}^t (1+g_{i})}{\binom{n_t}{2}}\\
&= \frac{2\sum_{i=0}^{t-1} g_{t-i}\prod_{j=1}^{t-i-1} (1+g_{j}) \prod_{j=t-i+1}^{t} (1+g_{j})+ g_{t}\prod_{i=1}^t (1+g_{i})}{\prod_{i=1}^{t} (1+g_{i}) - 1}.
\end{align*}
Now notice that 
$$ 
L(G_t) = \Theta\left(\frac{\sum_{i=0}^{t-1} g_{t-i}\prod_{j=1}^{t-i-1} (1+g_{j}) \prod_{j=t-i+1}^{t} (1+g_{j})}{\prod_{i=1}^{t} (1+g_{i})} + g_{t} \right)
$$
and 
\begin{align*}
\frac{\sum_{i=0}^{t-1} g_{t-i}\prod_{j=1}^{t-i-1} (1+g_{j}) \prod_{j=t-i+1}^{t} (1+g_{j})}{\prod_{i=1}^{t} (1+g_{i})}
&= \sum_{i=0}^{t-1}  \frac{g_{t-i}}{1+g_{t-i}}\\
&= t - \sum_{i=1}^{t} \frac{1}{g_{i}+1}.
\end{align*}
The proof follows. \qed
\end{proof}

Complex networks often exhibit high clustering, as measured by their clustering
coefficients; see \cite{bbook}. Informally, clustering measures local density. For a vertex $x$ of $G,$ let $e(x)$ be the number of edges in the subgraph induced by the neighbors of $x.$ The \textit{clustering coefficient} of $G$ is defined by $C(G) = \frac{1}{|V(G)|} \sum_{x \in V(G)} c_x (G),$ where $c_x (G) = \frac{e(x)}{{\deg (x) \choose 2}}.$ 
 Unlike the PA or ILT models where the clustering coefficient tends to 0 with $t$ \cite{ilt}, frustum model graphs have high clustering with clustering coefficients bounded away from 0.

\begin{lemma}\label{lem:deg1}
In $\FRUSTUM(1,1,g_t)$ for $t >0$ and $1 \le j \le t$, $x \in V(G_j) \setminus V(G_{j-1})$, we have that $\deg_t(x) = \sum_{i=j}^{t} g_{i}.$

\end{lemma}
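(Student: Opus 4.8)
The plan is to prove this by induction on $t$, tracking the degree of a fixed vertex $x$ born at time $j$ (that is, $x \in V(G_j)\setminus V(G_{j-1})$) as we pass from $G_{t-1}$ to $G_t$. The base case is $t = j$: when $x$ is created at time $j$, it belongs to a newly added clique $\textsc{CAP}(S,j) \cong K_{g_j}$ together with its unique parent vertex $S \in V(G_{j-1})$, so $x$ is adjacent to the other $g_j - 1$ new vertices of that cap and to $S$, giving $\deg_j(x) = g_j = \sum_{i=j}^{j} g_i$, as claimed.

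For the inductive step, suppose $\deg_{t-1}(x) = \sum_{i=j}^{t-1} g_i$ for some $t > j$. In the cone model $f_t = 1$, so at time-step $t$ a new clique of $g_t$ vertices is attached adjacent to every existing vertex of $G_{t-1}$; in particular $x$ gains exactly $g_t$ new neighbours, namely the $g_t$ vertices of $\textsc{CAP}(x,t)$, and no old edges are removed. Hence $\deg_t(x) = \deg_{t-1}(x) + g_t = \sum_{i=j}^{t-1} g_i + g_t = \sum_{i=j}^{t} g_i$, completing the induction.

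The only point requiring care — and the closest thing to an obstacle — is verifying that $x$ receives exactly $g_t$ new neighbours at each step after its birth, i.e. that the caps added for distinct vertices are vertex-disjoint and that the cap for $x$ itself contributes precisely $g_t$ neighbours and nothing more. This is immediate from the model's definition: in the cone case the index set over which caps are added is exactly $V(G_{t-1})$ (the cliques of order $f_t = 1$), the sets $Y_X = \textsc{CAP}(X,t)$ are pairwise disjoint by construction, and each $Y_X$ has exactly $g_t$ vertices all joined to $X$. So the only edges incident to $x$ created at time $t$ are the $g_t$ edges from $x$ to $\textsc{CAP}(x,t)$, and summing these increments from $i = j$ to $i = t$ yields the stated formula.
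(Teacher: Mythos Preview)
Your proof is correct and follows essentially the same inductive approach as the paper: establish the base case $\deg_j(x)=g_j$ for a vertex born at time $j$, then observe that at each subsequent step the cone model adds exactly $g_t$ new neighbours (the cap $\textsc{CAP}(x,t)$) to every existing vertex. Your version is in fact more explicit than the paper's, which simply asserts the degree increment without spelling out the disjointness of the caps.
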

\begin{proof}
Note that $G_1$ is a clique of order $g(1)+1$. For $t > 1$ and $1 \le j \le t$, $x \in V(G_j) \setminus V(G_{j-1})$, the degree of $x$ increases at time $t$ by $g_{t}$. Hence, 
$$
\deg_t(x) = \deg_{t-1}(x) + g_{t} = \sum_{i=j}^{t-1} g_{i} + g_{t} = \sum_{i=j}^{t} g_{i}.
$$
We say that $\deg_{t-1}(x) = 0$ if $x \in V(G_t) \setminus V(G_{t-1})$. The proof follows. \qed 
\end{proof}

A lower bound on the clustering coefficient of graphs from the cone models is given in the following theorem.

\begin{theorem}
In $\FRUSTUM(1,1,g_t)$, we have that $C(G_t) = \Theta(1).$
\end{theorem}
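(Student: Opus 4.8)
The plan is to show $C(G_t)$ is bounded below by a positive constant and above by $1$; the upper bound is trivial since each $c_x(G_t)\le 1$, so the work is the lower bound. I would average $c_x(G_t)$ over a well-chosen subpopulation of vertices that (a) comprise a constant fraction of $V(G_t)$ and (b) have $c_x(G_t)$ bounded away from $0$. The natural candidates are the vertices added at the final time-step, i.e.\ $V(G_t)\setminus V(G_{t-1})$: by the vertex recursion $n_t=\prod_{i=1}^t(1+g_i)$ proved earlier, these number $n_t-n_{t-1}=n_t\cdot\frac{g_t}{1+g_t}$, which is at least $n_t/2$ once $g_t\ge 1$, so they already make up at least half of all vertices. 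For such a vertex $x\in\textsc{CAP}(y,t)$, its neighborhood is $(\textsc{CAP}(y,t)\setminus\{x\})\cup\{y\}$, which is a clique of order $g_t$; hence $e(x)=\binom{g_t}{2}$ and $\deg_t(x)=g_t$ by Lemma~\ref{lem:deg1} (with $j=t$), so $c_x(G_t)=\binom{g_t}{2}/\binom{g_t}{2}=1$. Therefore
$$C(G_t)=\frac{1}{n_t}\sum_{x\in V(G_t)}c_x(G_t)\ \ge\ \frac{1}{n_t}\sum_{x\in V(G_t)\setminus V(G_{t-1})}c_x(G_t)\ =\ \frac{n_t-n_{t-1}}{n_t}\ \ge\ \frac12,$$
and combined with $C(G_t)\le 1$ this gives $C(G_t)=\Theta(1)$.

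One subtlety to flag: this argument uses $g_t\ge 1$ so that $\textsc{CAP}(y,t)$ is nonempty and $n_t-n_{t-1}\ge n_t/2$; since $g$ is a positive non-decreasing integer function this holds, and if $g_t\ge 1$ is all we know, the bound $\frac{g_t}{1+g_t}\ge\frac12$ still applies. A second subtlety is the degenerate small-$t$ case: $G_1$ is a clique, where every $c_x=1$, so $C(G_1)=1$, consistent with the claim. The induction-free, single-time-step counting above already handles all $t\ge 1$, so no separate base case is really needed beyond noting $g$ is positive.

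The main obstacle, such as it is, is purely bookkeeping: correctly identifying the induced neighborhood of a newly added vertex as a clique and invoking Lemma~\ref{lem:deg1} with the right index $j=t$ to get $\deg_t(x)=g_t$. There is no hard estimate here — the key structural fact is that at each time-step the cone model glues a fresh clique onto every existing vertex, so the freshly added vertices are simplicial with clustering coefficient exactly $1$, and they dominate the vertex count. An alternative, slightly more robust route (useful if one wanted a cleaner constant or to extend to general frustum models) would be to bound $c_x(G_t)$ from below for \emph{all} vertices using the clique structure around $x$, but restricting to the last layer is the shortest path to $\Theta(1)$ and is what I would write up.
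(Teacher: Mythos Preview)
Your argument is correct and, in fact, cleaner than the paper's. The paper computes $c_t(x)$ for a vertex $x$ born at any time $j\le t$ as $\Theta\!\left(\frac{\sum_{i=j}^t g_i^2}{(\sum_{i=j}^t g_i)^2}\right)$, invokes Cauchy--Schwarz to get $c_t(x)\ge \frac{1}{2(t-j+1)}$, and then sums this over all layers---only to discard every term except $j=t$ in the final inequality $\tfrac12\sum_{j=1}^t\frac{n_j-n_{j-1}}{t-j+1}\ge \tfrac12 g_t n_{t-1}$, arriving at $C(G_t)\ge \tfrac12\cdot\frac{g_t}{1+g_t}$. You reach the identical bound by simply observing that last-layer vertices are simplicial (their neighborhood is a subset of the clique $\{y\}\cup\textsc{CAP}(y,t)$, hence itself a clique), so $c_x=1$ for each of them, and they already comprise a $\frac{g_t}{1+g_t}\ge\tfrac12$ fraction of $V(G_t)$. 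Your route sidesteps the Cauchy--Schwarz detour entirely; what the paper's longer computation buys is an explicit (if weak) lower bound on $c_t(x)$ for \emph{every} vertex, which could be useful elsewhere but is not needed for the $\Theta(1)$ conclusion.

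One small caveat worth tightening in a write-up: when $g_t=1$, a last-layer vertex has degree $1$ and $c_x$ is conventionally $0$ (or undefined), not $1$. Since $g$ is non-decreasing, either $g_t\ge 2$ eventually (and your argument applies verbatim for all large $t$), or $g_t\equiv 1$---in which case the paper's formula for $c_t(x)$ is equally problematic. Either way this is a boundary technicality that does not affect the $\Theta(1)$ claim.
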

\begin{proof}
For a vertex $x$, define $e(x,t)$ to be $e(x)$ in $G_t$. For $1 \le j \le t$, $x \in V(G_j) \setminus V(G_{j-1})$, we have that
$$
e(x,t) = \sum_{i=j}^{t} \binom{g_{i}}{2}.
$$
Let $c_t(x)$ be the clustering coefficient of $x$ at time $t.$ Observe that 
$$
c_t(x) = \frac{\sum_{i=j}^t \binom{g_{i}}{2}}{\binom{\sum_{i=j}^{t} g_{i}}{2}} = \Theta\left(\frac{\sum_{i=j}^t g_{i}^2}{\left(\sum_{i=j}^{t} g_{i}\right)^2}\right).
$$
In particular,
$$
 \frac{1}2\frac{\sum_{i=j}^tg_{i}^2}{\left(\sum_{i=j}^{t} g_{i}\right)^2} \le c_t(x) \le 2\frac{\sum_{i=j}^tg_{i}^2}{\left(\sum_{i=j}^{t} g_{i}\right)^2}.
$$
By the Cauchy-Schwarz inequality, we have that
$$
\left(\sum_{i=j}^{t} g_{i}\right)^2 \le (t-j + 1)\sum_{i=j}^t g_{i}^2  
$$
and so
$$
\frac{\sum_{i=j}^t g_{i}^2}{\left(\sum_{i=j}^{t} g_{i}\right)^2 } \ge \frac{1}{t-j+1}.
$$
We then derive that
\begin{align*}
\sum_{x\in V(G_t)} c_t(x) 
	&= \sum_{j = 1}^t \sum_{x\in V(G_j)\setminus V(G_{j-1})} c_t(x) 
	\ge \frac12\sum_{j = 1}^t \sum_{x\in V(G_j)\setminus V(G_{j-1})} \frac{1}{t-j+1}\\
	&= \frac12\sum_{j = 1}^t \frac{n_j - n_{j-1}}{t-j+1}
	= \frac12\sum_{j = 1}^t \frac{g_{j} \prod_{i=1}^{j-1} (1+g_{i})}{t-j+1} 
    \ge \frac12 g_{t} n_{t-1}.
\end{align*}
Hence,
$$
C(G_t) \ge \frac12\left(\frac{g_{t}}{1+g_{t}}\right) = \Omega(1).
$$
The proof follows. \qed
\end{proof}

\subsection{Spectral expansion}

For a graph $G$ and sets of vertices $X,Y \subseteq V(G)$, define $E(X,Y)$ to be the set of edges in $G$ with one endpoint in $X$ and the other in $Y.$ For simplicity, we write $E(X)=E(X,X).$ Let $A$ denote the adjacency matrix and $D$ denote the diagonal degree matrix of a graph $G$. The \emph{normalized Laplacian} of $G$ is
\[ \mathcal{L} = I - D^{-1/2}AD^{-1/2}.\]
Let $0 = \lambda_0 \leq \lambda_1 \leq \cdots \leq \lambda_{n-1} \leq 2$ denote
the eigenvalues of $\mathcal{L}$. The \emph{spectral gap} of the normalized Laplacian is defined as
\[
\lambda = \max\{ |\lambda_1 - 1|, |\lambda_{n-1} - 1| \}.
\]

We will use the expander mixing lemma for the normalized Laplacian~\cite{sgt}. For sets of vertices $X$ and $Y$, we use the notation $\vol(X) = \sum_{v \in X} \deg(v)$ for the volume of $X$, $\overline{X} = V \setminus X$ for the complement of $X$, and, $e(X,Y)$ for the number of edges with one end in each of $X$ and $Y.$ Note that $X \cap Y$ need not be empty, and in this case, the edges completely contained in $X\cap Y$ are counted twice. In particular, $e(X,X) = 2 |E(X)|$.

\begin{lemma}[Expander mixing lemma]\cite{sgt}\label{mix}
If $G$ is a graph with spectral gap $\lambda$, then, for all sets $X \subseteq V(G),$
\[
\left| e(X,X) - \frac{(\vol(X))^{2}}{\vol(G)} \right| \leq \lambda \frac{\vol(X)\vol(\overline{X})}{\vol(G)}.
\]
\end{lemma}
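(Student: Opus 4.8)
Since the statement is quoted from \cite{sgt}, one may simply cite it; I sketch the standard spectral argument for completeness. The plan is to pass to the normalized adjacency matrix $\mathcal{A} = D^{-1/2}AD^{-1/2} = I - \mathcal{L}$, whose eigenvalues are $1 - \lambda_i$ for $i = 0, \dots, n-1$. The top eigenvalue is $1 = 1 - \lambda_0$, with (unit) eigenvector $\phi_0 = D^{1/2}\mathbf{1}/\sqrt{\vol(G)}$, and all remaining eigenvalues $1 - \lambda_i$, $i \ge 1$, satisfy $|1 - \lambda_i| \le \lambda$ by the very definition $\lambda = \max\{|\lambda_1 - 1|, |\lambda_{n-1} - 1|\}$ together with $\lambda_1 \le \lambda_i \le \lambda_{n-1}$. (We may assume $G$ has no isolated vertices, so that $D^{-1/2}$ is defined; isolated vertices change neither side of the inequality, and in any case all frustum graphs are connected.)

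First I would rewrite the left-hand side as a quadratic form. Writing $\mathbf{1}_X$ for the indicator vector of $X$ and setting $y = D^{1/2}\mathbf{1}_X$, we have $e(X,X) = \mathbf{1}_X^\top A \mathbf{1}_X = y^\top \mathcal{A}\, y$, together with the two identities $\|y\|^2 = \mathbf{1}_X^\top D \mathbf{1}_X = \vol(X)$ and $\langle y, \phi_0\rangle = \vol(X)/\sqrt{\vol(G)}$. Next I would decompose $y = \langle y, \phi_0\rangle\,\phi_0 + z$ with $z \perp \phi_0$. Because $\mathcal{A}\phi_0 = \phi_0$, the $\phi_0$-component contributes exactly $\langle y,\phi_0\rangle^2 = \vol(X)^2/\vol(G)$ to $y^\top \mathcal{A} y$, so that $e(X,X) - \vol(X)^2/\vol(G) = z^\top \mathcal{A} z$. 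Since $z$ lies in the span of the eigenvectors of $\mathcal{A}$ with eigenvalues $1 - \lambda_i$, $i \ge 1$, the Rayleigh quotient bound gives $|z^\top \mathcal{A} z| \le \lambda \|z\|^2$. Finally, $\|z\|^2 = \|y\|^2 - \langle y,\phi_0\rangle^2 = \vol(X) - \vol(X)^2/\vol(G) = \vol(X)\vol(\overline{X})/\vol(G)$, using $\vol(G) - \vol(X) = \vol(\overline{X})$, and combining these estimates yields the claimed inequality.

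The three identities for $\|y\|^2$, $\langle y, \phi_0\rangle$, and $\|z\|^2$ are routine once the normalization of $\phi_0$ is pinned down via $\|D^{1/2}\mathbf{1}\|^2 = \vol(G)$. The only genuinely substantive points are (i) ensuring $\mathcal{A}$ is well defined, which is why the no-isolated-vertex reduction is recorded, and (ii) verifying that the eigenvalues of $\mathcal{A}$ orthogonal to $\phi_0$ are all in $[-\lambda, \lambda]$, i.e.\ that the spectral-gap quantity $\lambda$ is exactly the operator norm of $\mathcal{A}$ restricted to $\phi_0^{\perp}$; both follow directly from the definitions above, so there is no real obstacle beyond bookkeeping.
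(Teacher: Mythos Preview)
Your argument is correct and is the standard proof of the expander mixing lemma for the normalized Laplacian. Note that the paper does not actually prove this lemma: it is stated with a citation to \cite{sgt} and used as a black box, so there is no ``paper's own proof'' to compare against. Your sketch supplies exactly the proof one would find in the cited reference, and all of the bookkeeping (the identification $e(X,X)=\mathbf{1}_X^\top A\,\mathbf{1}_X$ under the paper's convention that edges inside $X$ are counted twice, the normalization of $\phi_0$, and the computation $\|z\|^2=\vol(X)\vol(\overline{X})/\vol(G)$) is handled correctly.
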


A spectral gap bounded away from zero is an indication of bad expansion properties, which is characteristic for social networks; see \cite{estrada}. The next theorem represents a drastic departure for graphs from cone models from the good expansion found in binomial random graphs, where $\lambda = o(1)$~\cite{sgt,CL}.

\begin{theorem}
In $\FRUSTUM(1,1,g_t)$, we have that $\lambda_t \ge \frac{1}{2}.$
\end{theorem}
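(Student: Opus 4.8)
The plan is to apply the expander mixing lemma (Lemma~\ref{mix}) to a single newly added cap together with the vertex it is attached to. Such a set $X$ is a clique with a very small edge-boundary, hence internally far denser than a typical vertex set of the same volume, and this discrepancy forces the spectral gap to be bounded away from $0$.

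First I would fix $t$ and choose $u\in V(G_{t-1})$ of minimum degree in $G_{t-1}$. By Lemma~\ref{lem:deg1}, the vertices of minimum degree in $G_{t-1}$ are exactly those introduced at step $t-1$, so $\deg_{G_{t-1}}(u)=g_{t-1}\le g_t$ since $g$ is non-decreasing. Set $X=\{u\}\cup\textsc{CAP}(u,t)$, a clique of order $g_t+1$ in $G_t$, and record three facts: (i) $e(X,X)=2\binom{g_t+1}{2}=g_t(g_t+1)$; (ii) every vertex of $\textsc{CAP}(u,t)$ has all of its $G_t$-neighbours inside $X$, so the only edges leaving $X$ are the $g_{t-1}$ edges from $u$ into $V(G_{t-1})\setminus\{u\}$, whence $\vol(X)=g_t(g_t+1)+g_{t-1}$; and (iii) $\vol(G_t)=2e_t\ge n_{t-1}g_t(g_t+1)$, because each of the $n_{t-1}$ caps created at step $t$ contributes $\binom{g_t+1}{2}$ new edges, while $n_{t-1}=\prod_{i=1}^{t-1}(1+g_i)\ge 2^{t-1}$.

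Next I would plug these into Lemma~\ref{mix}. Writing $W=\vol(G_t)$ and $\overline X=V(G_t)\setminus X$, the lemma gives $\lambda_t\ge |W\,e(X,X)-\vol(X)^2|/(\vol(X)\vol(\overline X))$; since $\vol(X)<W$, in the regime $W\,e(X,X)\ge \vol(X)^2$ this is at least $\tfrac12$ as soon as $W\bigl(2e(X,X)-\vol(X)\bigr)\ge \vol(X)^2$. For our $X$ one has $2e(X,X)-\vol(X)=g_t^2+g_t-g_{t-1}\ge g_t^2$ and $\vol(X)^2\le(g_t^2+2g_t)^2=g_t^2(g_t+2)^2$, so this inequality --- and the side condition $W\,e(X,X)\ge\vol(X)^2$ --- both follow once $W\ge (g_t+2)^2$. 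Finally I would deduce $W\ge(g_t+2)^2$ from (iii): since $(g_t+2)^2/(g_t(g_t+1))\le \tfrac92$ for every $g_t\ge1$, it is enough that $n_{t-1}\ge 5$, which holds whenever $t\ge 4$; the handful of remaining small values of $t$ are checked directly, keeping in mind that $G_1=K_{g_1+1}$ has $\lambda_1=1/g_1$, so the statement is really about $t$ large.

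The calculations above are routine; the one genuine design choice, and where I expect the real work to sit, is the selection of $X$. It must be internally as dense as possible (to make $e(X,X)$ large), have a minuscule edge-boundary (so that $e(X,X)$ is pinned close to $\vol(X)$ and therefore far from the ``expected'' value $\vol(X)^2/W$), and have volume negligible compared with $\vol(G_t)$. A cap freshly attached at a minimum-degree vertex is precisely the object meeting all three demands at once, and the monotonicity of $g$ enters exactly here, bounding the boundary contribution $g_{t-1}$ by the in-cap degree $g_t$.
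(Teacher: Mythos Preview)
Your argument is correct (with the same ``for $t$ sufficiently large'' caveat that the paper's own proof implicitly carries), but the choice of witness set is genuinely different. The paper applies Lemma~\ref{mix} with $X = V(G_t)\setminus V(G_{t-1})$, the \emph{entire} layer of vertices added at step $t$: then $\vol(X)=n_{t-1}g_t^{2}$, $e(X,X)=n_{t-1}g_t(g_t-1)$, $\vol(\overline X)=2e_{t-1}+n_{t-1}g_t$, and the resulting lower bound on $\lambda$ is expressed in terms of the global ratio $e_{t-1}/n_{t-1}$, which is then controlled via the densification estimate $e_{t-1}/n_{t-1}\ge g_{t-1}/2$. You instead isolate a single cap together with its minimum-degree anchor, so the bad cut is a local object and the only structural input you need is the monotonicity of $g$ (to bound the boundary contribution $g_{t-1}$ by $g_t$). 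The paper's computation is a bit shorter and involves only the aggregate quantities $n_{t-1}$, $e_{t-1}$, $g_t$; your approach makes the mechanism of bad expansion more transparent---a freshly attached cap is a nearly isolated clique---and would adapt more readily to variants of the model where the global edge count is hard to track. Your observation that $G_1=K_{g_1+1}$ has $\lambda_1=1/g_1$ correctly flags that the unqualified statement fails at $t=1$ when $g_1\ge 3$; the paper's proof likewise breaks down there (since $e_0=0$).
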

\begin{proof}
Let $X = V(G_t)\setminus V(G_{t-1})$ be the set of new vertices used to create $G_t$ from $G_{t-1}$.  The following equations hold:
\begin{align*}
\vol(X)&= (n_t - n_{t-1})g_{t} = n_{t-1} g_{t}^2,\\
\vol(G_t)&=2 e_t = 2e_{t-1} + 2 \binom{g_{t}+1}{2}n_{t-1}\\
&= 2e_{t-1} + (g_{t}^2+g_{t})n_{t-1},\\
\vol(\overline{X})&=\vol(G_t) - \vol(X) \\
&= 2e_{t-1} + (g_{t}^2+g_{t})n_{t-1}-n_{t-1}g_{t}^2\\
&= 2e_{t-1} + n_{t-1}g_{t},\\
e(X,X) &= 2\binom{g_{t}}{2}n_{t-1} = g_{t}(g_{t}-1)n_{t-1}.
\end{align*}
Therefore, we derive that
\begin{align*}
e(X,X) - \frac{\vol(X)^2}{\vol(G_t)} &= 
\frac{n_{t-1}g_{t}(2(g_{t}-1)e_{t-1}-n_{t-1}g_{t})}
{2e_{t-1} + (g_{t}^2+g_{t})n_{t-1}}
\end{align*}
\begin{align*}
\frac{\vol(\overline{X})\vol(X)}{\vol(G_t)} &= 
\frac
{
n_{t-1}g_{t}^2(2e_{t-1}+n_{t-1}g_{t})
}
{
2e_{t-1} + (g_{t}^2+g_{t})n_{t-1}.
}
\end{align*}
Hence, by the Lemma~\ref{mix}, we have that
\begin{align*}
\lambda \ge 
\frac
{
|2(g_{t}-1)e_{t-1}-n_{t-1}g_{t}|
}
{
g_{t}(2e_{t-1}+n_{t-1}g_{t})
}
\ge \frac{2e_{t-1}}{2e_{t-1}+n_{t-1}g_{t}}
= \frac{2}{2 + g_{t}n_{t-1}/e_{t-1} }
\ge \frac{2}{2 + 2 } = \frac12.
\end{align*}
We use here that $n_t = o(e_t)$ and ${n_{t-1}}/{e_{t-1}} \le {2}/{g_{t}}$. \qed \end{proof}

\section{Conclusion and future directions}

We introduced the frustum model for complex networks, which is a deterministic model formed by iteratively extending cliques with parametrized orders over discrete time-steps. For a wide range of parameters, the frustum graphs densify over time. In the case of the cone model where one vertex cliques are extended, we showed that the model generates small world graphs with small distances and high clustering coefficients. We also showed that graphs from the cone model exhibit bad spectral expansion with respect to their normalized Laplacian matrices.

Many directions remain unexplored and will be considered in the full version of the paper. Several of the results for the cone model should go through if we assume $f_t=2,$ where we extend edges rather than vertices; see Figure~\ref{figfr}. Another interesting case to consider is when $g_t=1$, which is akin to iteratively adding inverted cones; see Figure~\ref{figfr}.  

\begin{figure}[h]
\centering
\begin{subfigure}
  \centering
  \includegraphics[width=.45\linewidth]{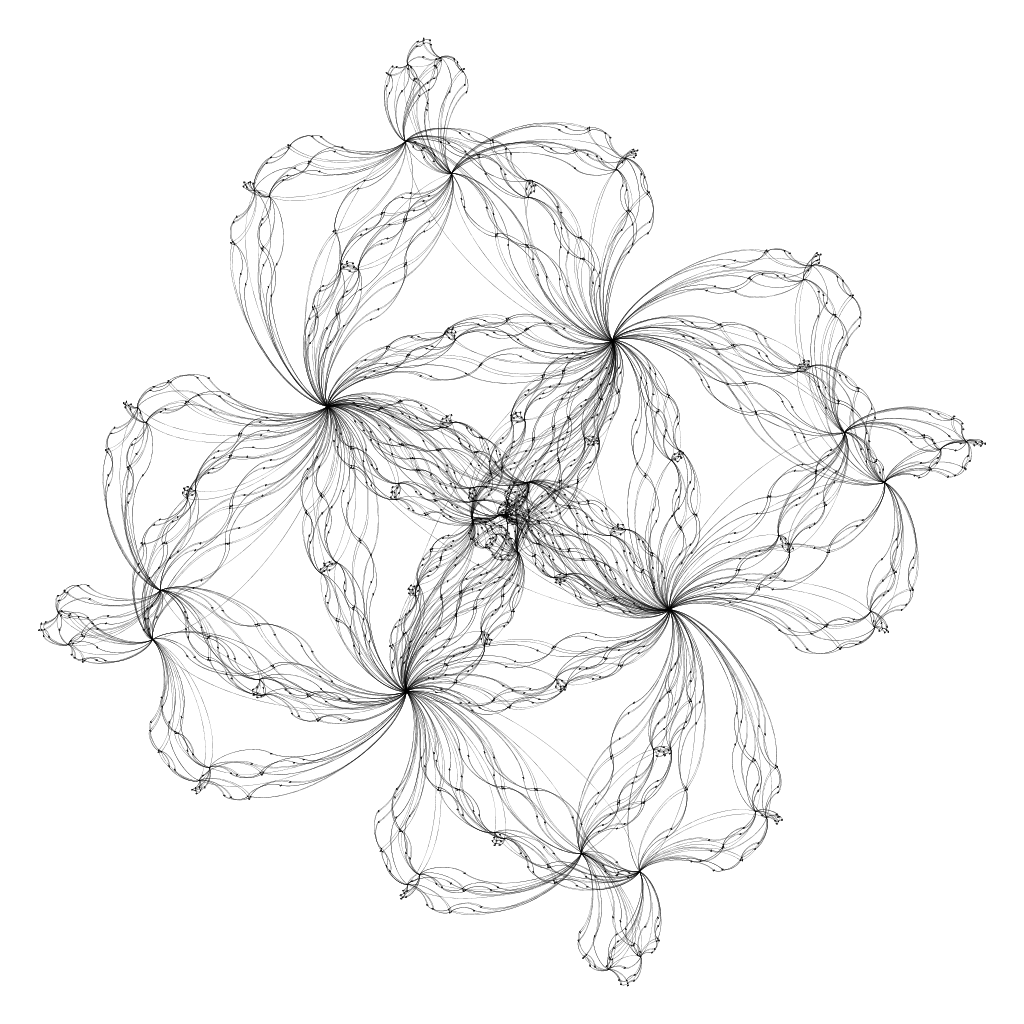}
  \label{fig:sub1}
\end{subfigure}%
\begin{subfigure}
  \centering
  \includegraphics[width=.45\linewidth]{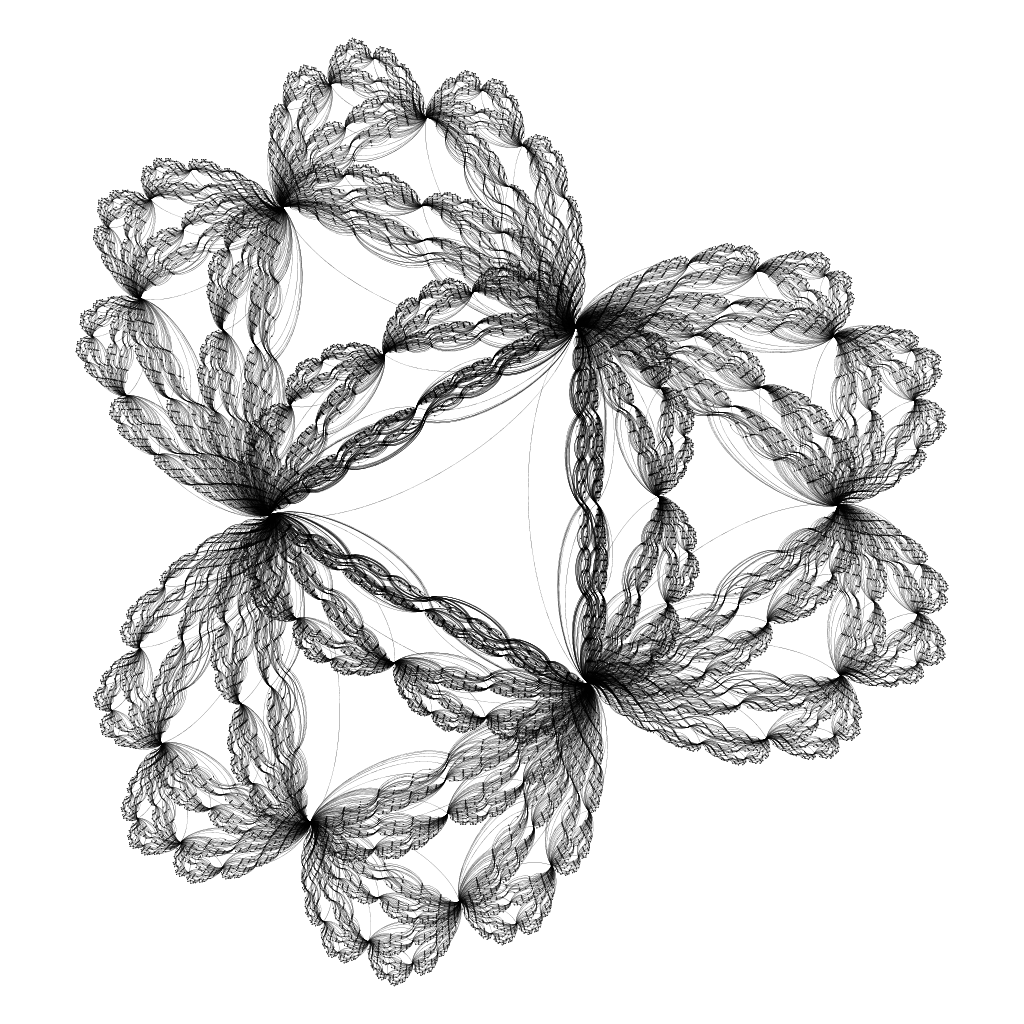}
  \label{fig:sub2}
\end{subfigure}
\caption{Frustum graphs when $f_t=2$, with $g_t = 2$ on the left at $t=5$, and $g_t=1$ on the right at $t=10$.  \label{figfr}}
\end{figure}

Finding a necessary and sufficient condition for densification based on the parameters $f$ and $g$ remains open. A natural question is to explore spectral expansion for models other than the cone model. Stochastic variations of the frustum model are of interest, where the order of the cliques added in each time-step is controlled by a random variable. An interesting direction is to explore graph theoretic parameters of frustum graphs such as their domination, chromatic, and cop numbers.

\section*{Appendix}

We present technical results from the paper not included in the main text due to space considerations. We begin with two combinatorial lemmas on the general frustum model.

\begin{lemma} \label{lem:density}
In the frustrum model $\FRUSTUM(n,f,g)$ with $e_t$ edges and $v_t$ vertices at time $t \ge 1$, we have that
$$\frac{e_{t} - e_{t-1}}{n_{t} - n_{t-1}} = \frac{g_t}{2} + f_t-\frac{1}{2}.$$
\end{lemma}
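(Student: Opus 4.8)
The plan is a straightforward double count, organized around the ``attachment sites'' of time-step $t$. Let $N_t$ denote the number of induced $f_t$-cliques in $G_{t-1}$; this is exactly the number of sets $Y_X$ created at step $t$. Note $N_t \ge 1$: for $t=1$ this is because $G_0=K_n$ contains an induced $f_1$-clique, and for $t\ge 2$ because the clique of order $f_{t-1}+g_{t-1}$ added at step $t-1$ has order strictly greater than $f_t$ by the standing assumption $f_t<f_{t-1}+g_{t-1}$, hence contains an induced $f_t$-clique. In particular $n_t-n_{t-1}>0$, so the quotient in the statement is well defined.

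Next I would count the new vertices and new edges in terms of $N_t$. For the vertices: each site $X$ contributes the $g_t$ fresh vertices of $Y_X$, and the model stipulates $Y_X\cap Y_{X'}=\emptyset$ for distinct sites and $Y_X\cap V(G_{t-1})=\emptyset$, so $n_t-n_{t-1}=N_t\,g_t$. For the edges: at site $X$ we add the $\binom{g_t}{2}$ edges inside $Y_X$ together with the $f_t g_t$ edges of the complete bipartite graph between $X$ and $Y_X$, for a total of $\binom{g_t}{2}+f_tg_t$ new edges per site. The key observation is that these edge sets are pairwise disjoint over the $N_t$ distinct sites: every new edge is incident to some vertex $y\in Y_X$, and the other endpoint of that edge lies in $X\cup Y_X$ (the only neighbours $y$ acquires), so the edge is ``charged'' to the unique site $X$ with $y\in Y_X$. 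Hence $e_t-e_{t-1}=N_t\bigl(\binom{g_t}{2}+f_tg_t\bigr)$.

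Finally I would divide the two expressions; the common factor $N_t$ cancels, leaving
$$\frac{e_t-e_{t-1}}{n_t-n_{t-1}}=\frac{\binom{g_t}{2}+f_tg_t}{g_t}=\frac{g_t-1}{2}+f_t=\frac{g_t}{2}+f_t-\frac12,$$
as claimed.

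The only place that requires any care is the edge-disjointness bookkeeping in the edge count — one must be sure that no newly added edge is attributed to two different attachment sites. This is immediate from the model's explicit disjointness of the sets $Y_X$ and from the fact that the new vertices' neighbourhoods are contained in their own $X\cup Y_X$, but it is the step where a careless count would go wrong; everything else is routine arithmetic.
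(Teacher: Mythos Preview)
Your proof is correct and follows essentially the same route as the paper: both count $g_t$ new vertices and $\binom{f_t+g_t}{2}-\binom{f_t}{2}=\binom{g_t}{2}+f_tg_t$ new edges per $f_t$-clique of $G_{t-1}$, then cancel the common factor. You are simply more careful than the paper about the well-definedness of the ratio and the edge-disjointness across sites.
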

\begin{proof}
Let $C^{k}_{t}$ be the total number of cliques of order $k$ in $G_t$. 
During the $t$th time step, each of the $C^{f_t}_{t-1}$ cliques of order $f_t$ were extended to $(f_t + g_t)$-cliques, and each such extension led to $\binom{f_t+g_t}{2} - \binom{f_t}{2}$ new edges and $g_t$ new vertices. 
Thus, the number of edges created during time $t$, which is $e_{t} - e_{t-1}$, is $\left(\binom{g_t + f_t}{2} - \binom{f_t}{2}\right)C^{f_t}_{t-1}$. The number of vertices created during time $t$, which is $n_t - n_{t-1}$, is $g_t C^{f_t}_{t-1}$. The proof follows.  \qed
\end{proof}

\begin{lemma} \label{lem:num_cliques}
In the frustrum model $\FRUSTUM(n,f,g)$ at time $t-1$ and with $u \in V(G_{t-1})$, we have that
\[C_{t-1}^{f_t}(u) \geq  \binom{g_{t-1}+f_{t-1}-1}{f_t-1} - \binom{f_{t-1}-1}{f_t-1}.\]
\end{lemma}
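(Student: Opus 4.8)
The plan is to locate, for an arbitrary vertex $u$ of $G_{t-1}$, a single clique of $G_{t-1}$ of order $f_{t-1}+g_{t-1}$ containing $u$, and then simply count its $f_t$-subsets through $u$. So the first step is the structural claim that every $u\in V(G_{t-1})$ lies in some clique $W$ of $G_{t-1}$ with $|W|=f_{t-1}+g_{t-1}$. If $u$ was created in time-step $t-1$, this is immediate from the construction: $u$ lies in a cap $Y_X$ attached to an induced $f_{t-1}$-clique $X$ of $G_{t-2}$, so $W:=X\cup Y_X$ works. If instead $u\in V(G_{t-2})$, it suffices to show that $u$ lies in some induced $f_{t-1}$-clique of $G_{t-2}$, since that clique is extended in step $t-1$ to such a $W$.

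To get this, I would prove by induction on $r\ge 0$ that every vertex of $G_r$ lies in an induced $f_{r+1}$-clique of $G_r$. The base case $r=0$ says every vertex of $K_n$ lies in an $f_1$-clique, which holds as long as $f_1\le n$, i.e.\ as long as the model actually performs an extension at the first step (here the standing assumptions $f_0\le n$ and $n<f_0+g_0$ are the relevant hypotheses). For the inductive step, if a vertex $v$ of $G_{r-1}$ lies in an $f_r$-clique $X$, then at time $r$ this clique is extended to the clique $X\cup Y_X$ of order $f_r+g_r$; since $f_{r+1}<f_r+g_r$, the clique $X\cup Y_X$ contains an induced $f_{r+1}$-clique through $v$, and the vertices created at step $r$ lie in such caps too. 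Applying this with $r=t-2$ shows that every $u\in V(G_{t-2})$ lies in an induced $f_{t-1}$-clique of $G_{t-2}$, as needed.

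With $W$ in hand, the count is routine: since $W$ is a clique, adjoining to $u$ any $f_t-1$ further vertices of $W\setminus\{u\}$ yields an $f_t$-clique of $G_{t-1}$ containing $u$, and distinct choices give distinct cliques. As $|W\setminus\{u\}|=f_{t-1}+g_{t-1}-1$ and $f_t-1\le f_{t-1}+g_{t-1}-1$ by the standing assumption $f_t<f_{t-1}+g_{t-1}$, this gives $C_{t-1}^{f_t}(u)\ge\binom{f_{t-1}+g_{t-1}-1}{f_t-1}\ge\binom{f_{t-1}+g_{t-1}-1}{f_t-1}-\binom{f_{t-1}-1}{f_t-1}$, which is the claimed bound; if one prefers to obtain exactly the stated right-hand side, then in the case $u\in V(G_{t-2})$ one counts only those $f_t$-cliques in $W$ that use at least one of the $g_{t-1}$ vertices of $Y_X$, of which there are precisely $\binom{f_{t-1}+g_{t-1}-1}{f_t-1}-\binom{f_{t-1}-1}{f_t-1}$. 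I expect the structural claim to be the only real obstacle: one must verify that the inequality $f_{r+1}<f_r+g_r$ together with the monotonicity of $f$ and $g$ genuinely propagates the invariant forward in time, and that the first extension off $G_0=K_n$ is legitimate; after that, the counting step is immediate.
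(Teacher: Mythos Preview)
Your argument is correct and follows essentially the same route as the paper's. Both proofs hinge on the single structural fact that every $u\in V(G_{t-2})$ lies in some induced $f_{t-1}$-clique of $G_{t-2}$, so that the extension at step $t-1$ places $u$ inside an $(f_{t-1}+g_{t-1})$-clique; the paper simply asserts $C_{t-2}^{f_{t-1}}(u)\ge 1$ ``by the mild assumptions,'' whereas you spell out the induction on $r$ that justifies it. The one cosmetic difference is that the paper first records the exact recursion
\[
C_{t-1}^{k}(u)=C_{t-2}^{k}(u)+\left(\binom{g_{t-1}+f_{t-1}-1}{k-1}-\binom{f_{t-1}-1}{k-1}\right)C_{t-2}^{f_{t-1}}(u)
\]
and then drops terms, while you pick a single extended clique $W$ and count directly; your route is slightly shorter and even gives the marginally stronger bound $\binom{f_{t-1}+g_{t-1}-1}{f_t-1}$, but the paper's recursion carries a bit more information should one want sharper estimates later. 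Your caveat about the base case is well placed: the stated hypotheses $f_0\le n$ and $n<f_0+g_0$ do not literally force $f_1\le n$, so the first extension being nontrivial is an implicit assumption in both arguments.
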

\begin{proof} 
The result holds by definition if $u \in V_{t-1}$, so we assume this is not the case. We analyze the number of cliques of order $k$ in $G_{t-1}$ that include a vertex $u \in V(G_{t-2})$. 
There are $C_{t-2}^{k}(u)$ cliques of order $k$ in $G_{t-2}$ that include $u$.
When applying the model at stage $t-1$ to obtain the graph $G_{t-1}$, each clique of $G_{t-1}$ of order $g_{t-1}+f_{t-1}$ that includes $u$ and a vertex in $V_{t-1}$ must have been created from one of the $C_{t-2}^{f_{t-1}}(u)$ cliques of order $f_{t-1}$ in $G_{t-2}$. 
Each of these $(g_{t-1}+f_{t-1})$-cliques will contain $\binom{g_{t-1}+f_{t-1}-1}{k-1} - \binom{f_{t-1}-1}{k-1}$ cliques of order $k$ that both contain $u$ and at least one vertex in $V_t$. 

We therefore have that
\[
C_{t-1}^{k}(u) = 
C_{t-2}^{k}(u) + 
\left(\binom{g_{t-1}+f_{t-1}-1}{k-1} - \binom{f_{t-1}-1}{k-1}\right)C_{t-2}^{f_{t-1}}(u).
\]
Noting that $C_{t-2}^{f_{t-1}}(u)\geq 1$ by the mild assumptions made about the frustrum model in its definition, this yields that $C_{t-1}^{k}(u) \geq \binom{g_{t-1}+f_{t-1}-1}{k-1} - \binom{f_{t-1}-1}{k-1}$. Substituting in $k=f_t$ finishes the proof.  \qed   
\end{proof}

The following lemma is used in our analysis of the cone model.

\begin{lemma}\label{lem:W}
In $\FRUSTUM(1,1,g_t)$, we have
$$\frac{1}{2} \sum_{(x,y) \in V(G_t)^2, x \not =y} d(x,y) = \prod_{i=1}^{t}(1+g_{i})\sum_{i=0}^{t-1} g_{t-i}\prod_{j=1}^{t-i-1} (1+g_{j}) \prod_{j=t-i+1}^{t} (1+g_{j}).  
$$
\end{lemma}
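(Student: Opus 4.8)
The plan is to prove the identity by induction on $t$, using the natural partition of $V(G_t)^2$ together with Lemma~\ref{lem:dist}. It helps to first compress the right-hand side: since $n_t=\prod_{i=1}^{t}(1+g_i)$, for each index $\ell$ we have $\prod_{j=1}^{\ell-1}(1+g_j)\prod_{j=\ell+1}^{t}(1+g_j)=n_t/(1+g_\ell)$, so taking $\ell=t-i$ and reindexing by $k=t-i$, the claimed quantity collapses to $n_t^2\sum_{k=1}^{t}\frac{g_k}{1+g_k}$. It therefore suffices to prove that the left-hand side equals $n_t^2\sum_{k=1}^{t}\frac{g_k}{1+g_k}$, and the triple-product form is recovered by reversing this rewriting at the end.

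For the induction the base case $t=1$ is a direct computation, since $G_1\cong K_{1+g_1}$. For the inductive step, recall that every vertex of $G_t$ either already lies in $V(G_{t-1})$ or lies in exactly one cap $\textsc{CAP}(z,t)$ with $z\in V(G_{t-1})$; write $\pi(a)$ for this parent of $a$, with $\pi(a)=a$ when $a\in V(G_{t-1})$. Split the ordered pairs $(a,b)$ with $a\neq b$ according to whether $\pi(a)\neq\pi(b)$ or $\pi(a)=\pi(b)$. When $\pi(a)=x\neq y=\pi(b)$, Lemma~\ref{lem:dist} gives $d_t(a,b)=d_{t-1}(x,y)$ if $a=x$ and $b=y$, $d_{t-1}(x,y)+1$ if exactly one of $a,b$ is a cap vertex, and $d_{t-1}(x,y)+2$ if both are; summing over the $1+g_t$ choices above $x$ and the $1+g_t$ choices above $y$, the pairs lying above a fixed $(x,y)$ contribute $(1+g_t)^2\, d_{t-1}(x,y)+2g_t(1+g_t)$. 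Summing over the $n_{t-1}(n_{t-1}-1)$ ordered pairs $x\neq y$ in $V(G_{t-1})$ and inserting the inductive hypothesis for $\sum_{x\neq y}d_{t-1}(x,y)$ evaluates this block. When $\pi(a)=\pi(b)=z$, both $a$ and $b$ lie in the clique $\{z\}\cup\textsc{CAP}(z,t)\cong K_{1+g_t}$, so $d_t(a,b)=1$, and summing over all $n_{t-1}$ choices of $z$ contributes $g_t(1+g_t)n_{t-1}=g_t n_t$. Assembling the two blocks and using $n_t=(1+g_t)n_{t-1}$ yields a recursion expressing the step-$t$ quantity through the step-$(t-1)$ quantity, $n_{t-1}$, and $g_t$.

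The main obstacle is the closing algebraic reconciliation: after substituting the inductive hypothesis in the form $n_{t-1}^2\sum_{k=1}^{t-1}\frac{g_k}{1+g_k}$, one must verify that the $(1+g_t)^2$-scaled old term plus the pair-counting constants of the recursion collapse exactly to $n_t^2\sum_{k=1}^{t}\frac{g_k}{1+g_k}$, equivalently that those constants telescope against the new summand $n_t^2\frac{g_t}{1+g_t}=g_t(1+g_t)n_{t-1}^2$. The computation is short, but it is precisely where a dropped normalization (for instance whether the same-parent pairs should enter with weight $d+1$ rather than $d$) or an off-by-one in a product range would surface, so I would first check the identity for $t=1$ and $t=2$ before trusting the general step. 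A cleaner-looking but heavier alternative avoids the induction: compute $d_t(a,b)$ directly by iterating Lemma~\ref{lem:dist} to peel the parent chains of $a$ and $b$ down through successive caps, reducing $d_t(a,b)$ to a distance in the earliest common ambient graph plus the number of peeled layers, and then sum over all pairs grouped by their birth times; I would keep this in reserve should the recursion prove awkward to close.
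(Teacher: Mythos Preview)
Your plan is essentially the paper's: induction on $t$ using the parent-based partition of $V(G_t)^2$ together with Lemma~\ref{lem:dist}. The paper carries out exactly this recursion, with two differences of presentation. First, the paper keeps the triple-product form of the right-hand side throughout, whereas your preliminary collapse to $n_t^{2}\sum_{k=1}^{t}g_k/(1+g_k)$ is cleaner and makes the telescoping you describe transparent. Second, the paper's proof of Lemma~\ref{lem:W} sums only over pairs with \emph{different} parents $x\neq y$ in $V(G_{t-1})$ and pushes the same-parent block to the proof of Theorem~\ref{thm:coneW}; you fold both blocks into a single induction, which is the right thing to do given the left-hand side as stated.

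Your instinct to test $t=1$ and $t=2$ before trusting the closure is well placed, and you should actually do it: as literally written the identity already fails at $t=1$, where the left side is $\binom{1+g_1}{2}$ while the right side is $g_1(1+g_1)$. Tracing your recursion shows why. Over different-parent pairs the constant contribution is $2g_t(1+g_t)\cdot n_{t-1}(n_{t-1}-1)$, not $2g_t(1+g_t)n_{t-1}^{2}$; the paper's computation silently makes this replacement, and together with its omission of the same-parent block this is precisely what lets the printed formula close. Your more careful bookkeeping produces the honest recursion
\[
W(G_t)=(1+g_t)^2\,W(G_{t-1})+g_t(1+g_t)n_{t-1}^{2}-\tfrac{1}{2}g_t n_t,
\]
which differs from $n_t^{2}\sum_{k\le t}g_k/(1+g_k)$ by the lower-order correction $\tfrac{1}{2}g_t n_t$. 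So your approach is correct and coincides with the paper's, but executing it faithfully will not land on the displayed right-hand side; it instead recovers the version with the extra $\tfrac{1}{2}g_t n_t$ term that appears at the end of the proof of Theorem~\ref{thm:coneW}.
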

\begin{proof}
Using the partition of $V(G_t)\times V(G_t)$, $x\not = y$, given in Theorem~\ref{thm:coneW},
we may proceed by induction to obtain that:
\begin{align*}
&\frac{1}{2} \sum_{(x,y) \in V(G_t)^2, x \not =y} d(x,y) \\
&= \frac{1}{2} \sum_{x,y \in V(G_{t-1}), x\not=y} \Big( 
	\sum_{(w,z) \in \textsc{CAP}(x,t)\times \textsc{CAP}(y,t)}  d(w,z) \\
	&\qquad+ \sum_{(x,z) \in \lbrace x\rbrace\times \textsc{CAP}(y,t)}  d(x,z) \\
	&\qquad+ \sum_{(z,y) \in \textsc{CAP}(x,t) \times \lbrace y\rbrace}  d(z,y)\\
	&\qquad+ d(x,y) \Big)\\
&= \frac{1}{2} \sum_{x,y \in V(G_{t-1}), x\not=y} \Big( g_{t}^2 (d(x,y)+2)+2f_t(d(x,y)+1)+d(x,y)\Big)\\	
&= \frac{1}{2} \sum_{x,y \in V(G_{t-1}), x\not=y} \Big( (g_{t}+1)^2 d(x,y) + 2f_t(g_{t}+1) \Big)\\
&= \frac{1}{2} \sum_{x,y \in V(G_{t-1}), x\not=y} (g_{t}+1)^2 + \prod_{i=1}^{t} (1+g_{i}) \prod_{i=1}^{t-1} (1+g_{i}) g_{t}\\
&= \left(\prod_{i=1}^{t-1}(1+g_{i})\sum_{i=0}^{t-2} g_{t-1-i}\prod_{j=1}^{t-i-2} (1+g_{j}) \prod_{j=t-i}^{t-1}  (1+g_{j})\right) (g_{t}+1)^2 \\
	&\qquad+ \prod_{i=1}^{t} (1+g_{i}) \prod_{i=1}^{t-1} (1+g_{i}) g_{t}\\
&= \prod_{i=1}^{t}(1+g_{i})\sum_{i=1}^{t-1} g_{t-i}\prod_{j=1}^{t-i-1} (1+g_{j}) \prod_{j=t-i+1}^{t} (1+g_{j})  \\
	&\qquad+ \prod_{i=1}^{t} (1+g_{i}) \prod_{i=1}^{t-1} (1+g_{i}) g_{t}\\	
&= \prod_{i=1}^{t}(1+g_{i})\sum_{i=0}^{t-1} g_{t-i}\prod_{j=1}^{t-i-1} (1+g_{j}) \prod_{j=t-i+1}^{t} (1+g_{j}).  \\	
\end{align*}

The proof follows. \qed
\end{proof}

\end{document}